\documentclass{article}
\usepackage[utf8]{inputenc}

\title{Mean estimation in the add-remove model of differential privacy}

\usepackage{authblk} 
\author{Alex Kulesza\thanks{kulesza@google.com}}
\author{Ananda Theertha Suresh\thanks{theertha@google.com}}
\author{Yuyan Wang
\thanks{wangyy@google.com}}
\affil{Google Research, New York}

\usepackage{natbib}
\usepackage{graphicx}
\usepackage{fullpage}
\usepackage{hyperref}
\hypersetup{colorlinks,linkcolor={red},citecolor={blue},urlcolor={red}}
\usepackage{amsmath, amssymb, amsthm, algorithm,pseudo}
\usepackage{subcaption}
\usepackage{xcolor}
\newcommand{\ignore}[1]{{}}
\newcommand{\EE}{\mathbb{E}}
\newcommand{\cD}{\mathcal{D}}
\newcommand{\cA}{\mathcal{A}}

\renewcommand{\Pr}{\text{Pr}}
\newcommand{\clip}{\text{Clip}}

\newcommand{\lap}{\text{Lap}}
\newtheorem{theorem}{Theorem}
\newtheorem{lemma}{Lemma}
\newcommand{\ar}{\text{ar}}
\newcommand{\sw}{\text{sw}}
\newtheorem{definition}{Definition}
\newtheorem{corollary}{Corollary}

\newcommand{\conf}[1]{}
\newcommand{\arxiv}[1]{#1}
\begin{document}

\maketitle

\newcommand{\algname}{hourglass\,}
\newcommand{\capalgname}{Hourglass\,}
\renewcommand{\epsilon}{\varepsilon}

\begin{abstract}
Differential privacy is often studied under two different models of neighboring datasets: the \emph{add-remove} model and the \emph{swap} model. While the swap model is frequently used in the academic literature to simplify analysis, many practical applications rely on the more conservative add-remove model, where obtaining tight results can be difficult. Here, we study the problem of one-dimensional mean estimation under the add-remove model. We propose a new algorithm and show that it is min-max optimal, achieving the best possible constant in the leading term of the mean squared error for all $\epsilon$, and that this constant is the same as the optimal algorithm under the swap model. These results show that the add-remove and swap models give nearly identical errors for mean estimation, even though the add-remove model cannot treat the size of the dataset as public information. We also demonstrate empirically that our proposed algorithm yields at least a factor of two improvement in mean squared error over algorithms frequently used in practice. One of our main technical contributions is a new \emph{\algname} mechanism, which might be of independent interest in other scenarios.
\end{abstract}

\section{Introduction}

Mean estimation is one of the simplest and most widely used techniques in statistics, and is often deployed as a subroutine for more complex analyses. However, the mean of a dataset can reveal private information, and so a variety of differentially private methods have been proposed to estimate it. These include private mean estimation with robustness \citep{liu2021robust}, mean estimation under statistical models \citep{kamath2020private}, and techniques with instance-specific guarantees \citep{huang2021instance, dick2023subset}. 

However, despite the ubiquitous nature of differentially private mean estimation, fundamental questions remain about the optimal error that can be achieved under commonly used variants of differential privacy. We begin with the general definition of a differentially private mechanism.

\begin{definition}[Differential privacy \citep{dwork2014algorithmic}]
A randomized, real-valued algorithm $A$ satisfies $\varepsilon$-differential privacy if for any two neighboring datasets $D, D'$ and for any output $\mathcal{S} \subseteq \mathcal{R}$, it holds that 
 \[
 \text{Pr}[A(D) \in \mathcal{S}] \leq e^{\varepsilon} \cdot \text{Pr}[A(D') \in \mathcal{S}].
 \]
\end{definition}
It remains to specify what makes two datasets \emph{neighbors}. Two definitions are commonly used. The first, called the \emph{swap} model \citep{dwork2006calibrating, vadhan2017complexity}, defines two datasets $D$ and $D'$ as neighboring if and only if 
\[
|D \setminus D'| = 1 \text{ and } |D' \setminus D| = 1.
\]
The second, called the \emph{add-remove} model \citep[Definition 2.4]{dwork2014algorithmic}, defines $D$ and $D'$ as neighboring if and only if 
\[
|D \setminus D'| + |D' \setminus D| = 1.
\]
Intuitively, under the swap model, $D'$ is obtained from $D$ by changing a value, while under the add-remove model, it is obtained by adding or removing a value.

While both neighborhood models of differential privacy are studied in the literature, the add-remove model is more frequently used for statistical queries in practice \citep{
mcsherry2009privacy, wilson2019differentially, rogers2020linkedin, amin2022plume}, likely because it is more conservative: the add-remove model protects the \emph{size} of the input dataset, while the swap model does not. Furthermore, a $\varepsilon$-DP algorithm under the add-remove model is also a $2\varepsilon$-DP algorithm under the swap model, so in this sense the relationship is strict.

Here, we revisit the problem of scalar mean estimation in the add-remove model of differential privacy, proposing a new algorithm that is min-max optimal, including the constant on the leading term of the error, and showing that the add-remove and swap models give nearly identical errors despite the former's additional protections.

\subsection{Setting}

Let $\cD_n(\ell, u)$ denote the set of all datasets consisting of $n$ real values in the range $[\ell, u]$, let $\cD_{\geq n}(\ell, u)$ denote the set of all datasets consisting of \emph{at least} $n$ real values in the range $[\ell, u]$, and let $\cD^*(\ell, u)$ denote the set of datasets of all sizes with values in $[\ell,u]$.

Given a dataset $D = \{x_1, x_2,\ldots,\}$, the mean is given by
\[
\mu(D) \triangleq \frac{1}{|D|} \sum_{x \in D} x.
\]
We measure the utility of a mean estimator $\hat{\mu}: \cD^*(\ell, u) \to [\ell, u]$ for a dataset $D$ in terms of mean squared error (MSE),
\[
L(\hat{\mu}, D) \triangleq \EE[(\hat{\mu}(D) - \mu(D))^2],
\]
where the expectation is over the randomization of the estimator. Typically, for private estimators, $L(\hat{\mu}, D)$ decreases with the size of $D$ as $O(1/|D|^2)$. Hence, we measure the normalized mean squared error as 
\[
L_{\text{norm}}(\hat{\mu}, D)  \triangleq |D|^2 L(\hat{\mu}, D).
\]

Let $\cA^{\sw}_\varepsilon$ denote the set of all $\varepsilon$-differentially private algorithms under the swap model, and let $\cA^{\ar}_\varepsilon$ denote the set of all $\varepsilon$-differentially private algorithms under the add-remove model. The min-max normalized mean squared error for sufficiently large datasets in the swap model is defined as 
\[
R_{\sw}(\varepsilon, n_0, \ell, u)  \triangleq \inf_{\hat{\mu} \in \cA^{\sw}_\varepsilon} \sup_{D \in \cD_{\geq n_0}(\ell, u)} L_{\text{norm}}(\hat{\mu}, D),
\]
and similarly in the add-remove model is
\[
R_{\ar}(\varepsilon, n_0, \ell, u)  \triangleq \inf_{\hat{\mu} \in \cA^{\ar}_\varepsilon} \sup_{D \in \cD_{\geq n_0}(\ell, u)} L_{\text{norm}}(\hat{\mu}, D).
\]

\subsection{Optimality in the swap model}

\citet{geng2014optimal} showed that
\begin{equation}
R_{\sw}(\varepsilon, n_0, \ell, u) = (u-\ell)^2 \cdot \sigma^2(\epsilon),
\label{eq:swap_all_epsilon}
\end{equation}
where 
\begin{equation}
\label{eq:opt_error}
\sigma^2(\epsilon) = \frac{2^{-2/3} e^{-2\epsilon/3}(1+e^{-\epsilon})^{2/3} + e^{-\epsilon}}{(1-e^{-\epsilon})^2}.
\end{equation}
As $\epsilon \to 0$, $\sigma^2(\epsilon) \to 2/\epsilon^2$, and hence for small values of $\epsilon$
\begin{equation}
R_{\sw}(\varepsilon, n_0, \ell, u) = \frac{2(u-\ell)^2}{\varepsilon^2} \left(1 \pm o(1) \right),
\label{eq:swap}
\end{equation}
where the $o(1)$ term tends to zero as $n_0 \to\infty$. The Laplace mechanism matches this mean squared error up to the $o(1)$ term, and in this sense is optimal as $\epsilon \to 0$.

However, for larger $\epsilon$, the Laplace mechanism is not optimal. \citet{geng2014optimal} defined a class of differentially private mechanisms called \emph{staircase} mechanisms whose density is parameterized by $\gamma \in [0, 1]$ and showed that, for any monotonic loss, there exists a $\gamma$ such that the staircase mechanism is the optimal differentially private mechanism for that loss. We provide a definition of the staircase mechanism in Definition~\ref{def:1d_stair} for completeness.

\subsection{Optimality in the add-remove model}

The story is a bit more complicated under the add-remove model. Since the mean is the ratio of the sum ($s \triangleq \sum_{x \in D} x$) to the count ($n \triangleq |D|$), one simple algorithm for private mean estimation is to use a fraction of the privacy budget (say, $\varepsilon/2$) to estimate the sum as $\hat{s}$, use the remaining privacy budget ($\varepsilon /2$) to estimate the count as $\hat{n}$, and finally estimate the mean as $\hat{s} / \hat{n}$. Since the true mean always lies in the range $[\ell, u]$, we can additionally clip the result to $[\ell, u]$ to improve accuracy. This standard algorithm is shown in Algorithm~\ref{alg:very_old}, where $\clip(x, [a, b]) = \max(a, \min(x, b))$.

\begin{minipage}{0.44\textwidth}
\begin{algorithm}[H]
\noindent\textbf{Input:} Multiset $D \subset [l,u]$, $\varepsilon > 0$.
\begin{pseudo}
    Let $w = \max(|\ell|, |u|)$.\\
    Let  $s = \sum_{x\in D}x$. \\
    Let $n = |D|$. \\
    Let $\hat s = s + Z_s$, where $Z_s \sim \lap(\frac{2w}{\varepsilon})$.\\
    Let $\hat n = n + Z_n$, where $Z_n \sim \lap(\frac{2}{\varepsilon})$.\\
    Output $\hat \mu = \clip(\frac{\hat s}{\hat n}, [\ell, u])$.
\end{pseudo}
\caption{Independent noise addition.}
\label{alg:very_old}
\end{algorithm}
\end{minipage}
\hfill
\begin{minipage}{0.44\textwidth}
\begin{algorithm}[H]
\noindent\textbf{Input:} Multiset $D \subset [l,u]$, $\varepsilon > 0$.
\begin{pseudo}
    Let $w = u - l$ and $m = \frac{l+u}{2}$.\\
    Let $D' = D - m$.\\
    Let  $s = \sum_{x\in D'}x$. \\
    Let $n = |D'|$. \\
    Let $\hat s = s + Z_s$, where $Z_s \sim \lap(\frac{w}{\varepsilon})$.\\
    Let $\hat n = n + Z_n$, where $Z_n \sim \lap(\frac{2}{\varepsilon})$.\\
    Output $\hat \mu = \clip(\frac{\hat s}{\hat n}, [-\frac{w}{2},\frac{w}{2}]) + m$.
\end{pseudo}
\caption{Shifted noise addition.}
\label{alg:current}
\end{algorithm}
\end{minipage}
\vspace{2ex}

The noise added in Algorithm~\ref{alg:very_old} is proportional to $\max(|\ell|, |u|)$, which can be badly suboptimal, for instance if $l = 10^6$ and $u = 10^6+1$. Algorithm~\ref{alg:current} modifies Algorithm~\ref{alg:very_old} by shifting the inputs by $(\ell+u)/2$ before computing the sum, reducing the sensitivity to $(u-\ell)/2$. This improves the error, sometimes dramatically. It can be shown that, for any dataset $D \in \cD^*(\ell, u)$,
\[
L_{\text{norm}}(\text{Algorithm~\ref{alg:current}}, \ell, u) \leq \frac{4(u-\ell)^2}{\varepsilon^2} \left(1 + o(1) \right)
\]
and hence
\begin{equation}
\label{eq:current}
R_{\ar}(\varepsilon, n_0, \ell, u) \leq \frac{4(u-\ell)^2}{\varepsilon^2} \left(1 + o(1) \right).
\end{equation}
As before, $o(1)$ indicates a term that vanishes as the minimum dataset size $n_0$ grows; for simplicity, we drop $n_0$ in the notation of $R_{\sw}$ and $R_{\ar}$ for the rest of the paper.

This result is still a factor of two larger than the swap model lower bound in~\eqref{eq:swap}, and the loss is even further from $R_{\sw}$ in the low-privacy regime where $\epsilon$ is large. Recently, \citet{kamath2023bias} proposed a generic mechanism to convert a differentially private algorithm in the swap model to a differentially private algorithm in the add-remove model and instantiated it for the unbiased mean estimation problem. However, their focus was not the constant in the mean squared error, and the stated result \citep[Theorem D.6]{kamath2023bias} has a constant of $9$.

Thus, a natural question remains: is mean estimation in the add-remove model inherently harder than in the swap model? We show in this paper that the answer is no.

\section{Our contributions}

We propose a new mean estimation algorithm in Algorithm~\ref{alg:new}, introducing two key improvements over Algorithm~\ref{alg:current}:

\textbf{Transformed noise addition.} Instead of estimating the sum and count directly as in Algorithm~\ref{alg:current}, the new algorithm estimates a linear transformation of the sum and count to reduce $\ell_1$ sensitivity. This is sufficient to achieve optimal error using the vector Laplace mechanism in the high-privacy regime where $\epsilon$ is small.

\textbf{The \algname mechanism.} 
To achieve optimal error in the low-privacy regime where $\epsilon$ is large, we propose a new two-dimensional noise distribution called the \emph{\algname} mechanism. It has the desirable property that the marginal distribution over either dimension is the optimal univariate staircase mechanism \citep{geng2014optimal}.

We show that the \algname mechanism can be sampled efficiently, and prove a bound on the mean squared error of Algorithm~\ref{alg:new} when the noise is drawn from the \algname mechanism. Combined with an information-theoretic lower bound on $R_{\ar}(\epsilon, \ell, u)$, this shows that Algorithm~\ref{alg:new} is optimal for all $\epsilon$, $\ell$, and $u$.

These bounds also match the result of \citet{geng2014optimal} for $R_{\sw}(\epsilon, \ell, u)$, establishing that the swap model and the add-remove model give the same mean squared error (up to $o(1)$ terms).

We note in passing that, when the bounds $\ell$ and $u$ are unknown, Algorithm~\ref{alg:new} can be combined with standard clipping algorithms to perform on unbounded domains \citep{amin2019bounding}.

\subsection{Overview of technical results}

We first analyze Algorithm~\ref{alg:current} as a baseline and provide a dataset specific upper bound in Lemma~\ref{lem:current}, proving that for any dataset $D$, its mean squared error is upper bounded by 
\begin{align*}
\left( \frac{2(u-l)^2}{|D|^2\varepsilon^2} + \frac{8(\mu-\frac{u+\ell}{2})^2}{|D|^2\varepsilon^2}\right) \left( 1 + o(1)\right).
\end{align*}

We then analyze Algorithm~\ref{alg:new} with Laplace noise in Theorem~\ref{thm:new}, showing that its error is at most 
\begin{align}
 \left( \frac{(u-l)^2}{|D|^2\varepsilon^2} + \frac{4(\mu-\frac{u+\ell}{2})^2}{|D|^2\varepsilon^2}\right) \left( 1 + o(1)\right),
 \label{eq:contrib_temp}
\end{align}
and hence
\[
R_{\text{ar}}(\varepsilon, \ell, u) \leq \frac{2(u-\ell)^2}{\varepsilon^2} \left(1 + o(1) \right).
\]
This is the same as the min-max MSE of the swap model for small values of $\epsilon$.

%A standard way of extending the results from small values of $\epsilon$ to large values of $\epsilon$ is to use the staircase mechanism. We use the two-dimensional version of staircase mechanism (Definition~\ref{def:2d_stair}) designed in \citet{geng2015staircase} to sample $Z_1$ and $Z_2$ and in Lemma~\ref{lem:two_stair_mean} show that the error is at most

We next analyze Algorithm~\ref{alg:new} with noise drawn from the two-dimensional staircase mechanism proposed by \citet{geng2015staircase} in Lemma~\ref{lem:two_stair_mean}, and show that its error is at most
\begin{align*}
 \left( \frac{(u-l)^2 \widetilde{\sigma}^2(\epsilon)}{|D|^2}\right) \left( 1 + o(1)\right),
\end{align*}
where $\widetilde{\sigma}^2(\epsilon)$ is the variance of the two-dimensional staircase mechanism optimized for mean squared error with privacy guarantee $\epsilon$.
While the above result is better than~\eqref{eq:contrib_temp} for large values of $\epsilon$, it still does not match the error of the swap model in general. This is due to the fact that for large values of $\epsilon$, the error of the swap model scales as
\[
\sigma^2(\epsilon) =   \Theta \left( (u-\ell)^2 e^{-2\epsilon/3} \right),
\]
while the error of the two dimensional staircase mechanism applied in Algorithm~\ref{alg:new} scales as 
\[
\widetilde{\sigma}^2(\epsilon)  =  \Theta \left( (u-\ell)^2 e^{-\epsilon/2} \right).
\]
(See Lemma~\ref{lem:two_stair} in the Appendix for details.)

%To improve the constant in the exponent, we propose the new \algname mechanism. We first analyze few privacy and utility properties of the \algname mechanism in Theorem~\ref{thm:hourglass_dp}: we show that it is $\epsilon$-DP and it admits a simple sampling procedure, which is not necessarily true for all exponential mechanisms. Further, the marginal distribution along either dimension, induced by the \algname noise distribution, is exactly the optimal noise distribution for the univariate MSE. This is the key property leading to its optimality, and is not enjoyed by the natural extension staircase mechanism into two dimensions. It is made possible by the geometric properties of the sensitivity space over any two neighboring datasets. 

We finally analyze Algorithm~\ref{alg:new} with noise drawn from the \algname mechanism in Theorem~\ref{thm:hourglass_main}, showing that its error is upper bounded by 
\begin{align*}
 \left( \frac{(u-l)^2 {\sigma}^2(\epsilon)}{|D|^2} \right)\left( 1 + o(1)\right),
\end{align*}
where $\sigma^2(\epsilon)$ is given in~\eqref{eq:opt_error}, matching the swap model lower bound in~\eqref{eq:swap_all_epsilon}.

In addition, we prove an information-theoretic lower bound for the add-remove model in Theorem~\ref{thm:lower}:
\[
R_{\text{ar}}(\varepsilon, \ell, u) \geq (u-\ell)^2 \cdot \sigma^2(\epsilon) \cdot \left(1 - o(1) \right),
\]
establishing that 
\[
R_{\text{ar}}(\varepsilon, \ell, u) = (u-\ell)^2 \cdot \sigma^2(\epsilon) \cdot \left(1 \pm o(1) \right),
\]
and therefore that Algorithm~\ref{alg:new} is optimal, as well as that the add-remove and swap models have equivalent mean squared error for mean estimation.

%and for hence small values of $\epsilon$,
%\[
%R_{\text{ar}}(\varepsilon, \ell, u) = \frac{2(u-\ell)^2}{\varepsilon^2} \left(1 \pm o(1) \right).
%\]

% Note that the leading order terms in the upper bounds on MSE for Algorithms~\ref{alg:current} and~\ref{alg:new} differ by a factor of two. In Section~\ref{sec:experiments}, we show that the ratio of true MSE between the two algorithms is two for several datasets. 

\begin{algorithm}[t]
\noindent\textbf{Input:} Multiset $D \subset [l,u]$, $\varepsilon > 0$.
\begin{pseudo}
    Let $w = u - l$.\\
    Let $D' = D - l$.\\
    Let $s_1 = \sum_{x\in D'}x / w$. \\
    Let $s_2 = \sum_{x\in D'}(1-x/w)$. \\
    Let $Z =  [Z_1, Z_2] \sim \text{two-dim. noise mechanism}(\epsilon) $.     \\
    Let $\hat s_1 = s_1 + Z_1$.\\
    Let $\hat s_2 = s_2 + Z_2$.\\
    Output $\hat \mu = w \cdot \clip(\frac{ \hat s_1}{\hat s_1 + \hat s_2}, [0, 1]) + l$.
\end{pseudo}
\caption{Transformed noise addition.}
\label{alg:new}
\end{algorithm}

The rest of the paper is organized as follows. In Section~\ref{sec:high} we discuss the high-privacy regime, showing how a linear transformation on the sum and count leads to optimal error using the Laplace mechanism. In Section~\ref{sec:all}, we generalize our results result to the low-privacy regime, introducing the \algname mechanism in Section~\ref{sec:hour} and applying it to mean estimation in Section~\ref{sec:hour_mean}.  In Section~\ref{sec:lower} we prove an information-theoretic lower bound showing that our results with the \algname mechanism are optimal in the add-remove model and match the optimal error in the swap model as well. Finally, in Section~\ref{sec:experiments}, we empirically demonstrate the performance of our algorithm.

\begin{figure*}
    \centering
    \includegraphics[width=1.0\textwidth]{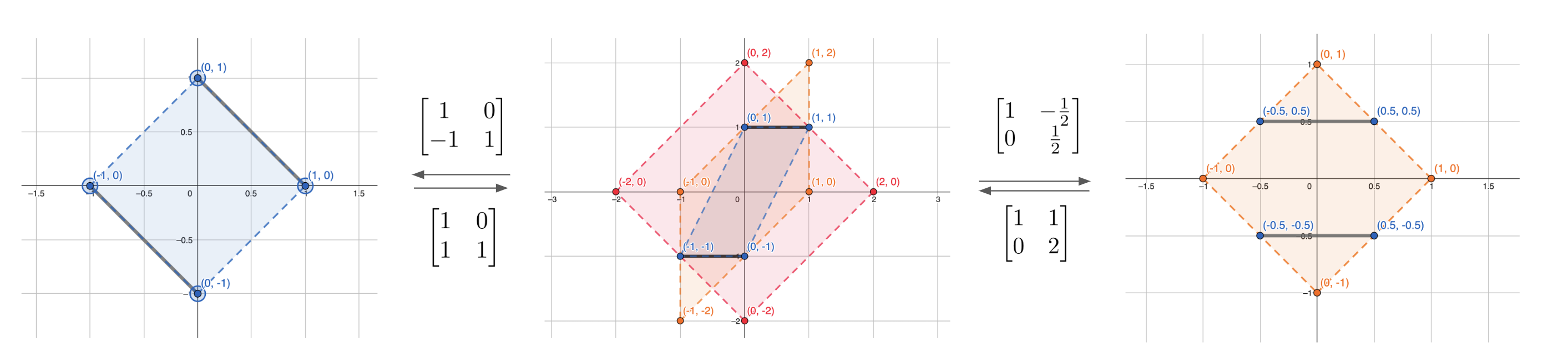}
    \caption{The Laplace mechanism applied to different linear transformations of $S(q)$. The middle plot shows the original sensitivity space, where \color{red}red \color{black}denotes the noise ball used by Algorithm~\ref{alg:very_old}; \color{orange}orange
    \color{black}the ball used by Algorithm~\ref{alg:current}, and  \color{blue}blue \color{black}the ball used by Algorithm~\ref{alg:new}, which is the smallest convex shape possible. The left plot shows the transformed space used by Algorithm~\ref{alg:new}, and the right plot shows the transformed space used by Algorithm~\ref{alg:current}.}
    \label{fig:unit_balls}
\end{figure*}

\section{High privacy regime}
\label{sec:high}

We first build intuition by viewing Algorithms~\ref{alg:very_old} and \ref{alg:current} geometrically, drawing on the framework of \citet{hardt2010geometry}. To simplify, we will assume $\ell = 0$ and $u = 1$. Let 
$$
q(D) = \left[\sum_{x \in D}x, \sum_{x \in D}1\right]
$$
be the two-dimensional sum and count vector for dataset $D$. Define the \emph{sensitivity space} $S(q)$ to be the set of possible values for $q(D) - q(D')$ when $D$ and $D'$ are neighboring datasets. Under the add-remove model, the sensitivity space for $q$ is $\pm[x, 1]$ for $x \in [0, 1]$, depicted by the two bold line segments in the middle plot of Figure~\ref{fig:unit_balls}. 

The standard vector Laplace mechanism can be used to obtain a differentially private estimate of $q(D)$ based on its sensitivity. In particular, the mechanism adds noise scaled to the maximum $\ell_1$ norm of the sensitivity space---that is, the smallest constant $a$ such that the $\ell_1$ ball $\{x: \|x\|_1 = a\}$ contains $S(q)$. Here, the minimum value of $a$ is 2, as shown by the red diamond in the middle plot of Figure \ref{fig:unit_balls}. And, indeed, adding Laplace noise scaled to $a=2$ is precisely what Algorithm~\ref{alg:very_old} does.

However, noise added in this way actually supports a much larger sensitivity space than just $S(q)$, as can be seen in the figure. This means that Algorithm~\ref{alg:very_old} is effectively ``wasting'' noise to protect against changes to $q$ that cannot occur, unnecessarily increasing error. (The problem is even worse when the range $[\ell, u]$ is far from zero, as noted earlier.)

Compared to this naive approach, Algorithm~\ref{alg:current} is significantly better, since the shift in Line 2 maps values in the dataset from $[0,1]$ to $[-\frac{1}{2}, \frac{1}{2}]$. The bold line segments in the right plot of Figure \ref{fig:unit_balls} depict the new sensitivity space, which is contained by the $\ell_1$ ball with smaller scale $a=1$ (shown in orange).

In fact, one can show that Algorithm~\ref{alg:current} is equivalent to the following procedure:
\begin{enumerate}
    \item  Apply a linear transformation given by the matrix $\begin{bmatrix} 1 & -\frac{1}{2} \\ 0 & \frac{1}{2} \end{bmatrix}$ to the sum and count vector $q(D)$.
    \item Add vector Laplace noise to the transformed vector according to its (reduced) sensitivity ($a=1$).
    \item Reverse the transformation by applying the inverse matrix $\begin{bmatrix} 1 & 1 \\ 0 & 2 \end{bmatrix}$, then divide to estimate the mean and truncate as before.
\end{enumerate}
In the original (untransformed) $q$-space, depicted in the middle plot of Figure \ref{fig:unit_balls}, the region protected by the resulting noise distribution is a parallelogram, shown in orange. Compared with the red diamond of Algorithm~\ref{alg:very_old}, this region more tightly encloses the sensitivity space, reducing noise and increasing accuracy. However, it still does not perfectly enclose $S(q)$.

\subsection{Optimizing the Laplace mechanism via linear transformation}
\label{sec:knorm_geometric}

The approach in Algorithm~\ref{alg:new} is to transform $q$ so that the $\ell_1$ unit ball encloses it as tightly as possible. To do this, it maps the two segments of the sensitivity space $[(0,1), (1,1)]$ and $[(-1,-1), (0, -1)]$ onto two sides of the $\ell_1$ ball using the transformation shown in the left plot of
Figure~\ref{fig:unit_balls}. The enclosing $\ell_1$ ball is depicted in blue.

More concretely, Algorithm~\ref{alg:new} computes the 
the scaled sum (denoted by $s_1$) and the difference of count and scaled sum (denoted by $s_2$). It then privatizes both $s_1$ and $s_2$ and uses them to compute mean. Next, we prove that Algorithm~\ref{alg:new} instantiated with Laplace noise is not just intuitively better but in fact optimal
for minimizing the mean squared error of the mean when $\epsilon$ is sufficiently small.

\subsection{Analysis of Laplace noise algorithms}
\label{sec:upper}
We first state a technical result which we use in proving upper bounds. 
\begin{lemma}
\label{lem:technical}
Let $b \geq 0 $. Let $a$ be such that $|a|/b \leq M$. Let $C = \frac{Z_a}{b} - \frac{aZ_b}{b^2}$ and $ F = \left \lvert \frac{2MZ^2_b}{b^2}\right \rvert + \left \lvert  \frac{2Z_a Z_b}{b^2} \right \rvert $. Then,
\begin{align*}
\EE \left[ \left(\clip\left( \frac{a + Z_a}{b+Z_b}  \right)- \frac{a}{b}   \right)^2\right]
& \leq \EE\left[C^2\right] + \EE \left[F^2\right] \\
& + 2 \sqrt{\EE \left[ C^2 \right] \EE\left[ F^2\right]}  \\
& + 4 M^2 \Pr(Z_b < - b/2).
\end{align*}
\end{lemma}
We provide the proof in Appendix~\ref{app:technical}. We next state an upper bound on the mean squared error of Algorithm~\ref{alg:current}.
\begin{lemma}
\label{lem:current}
The output of Algorithm~\ref{alg:current} is $\varepsilon$-differentially private. For any dataset $D \in \cD^*(\ell, u)$, 
the mean squared error of Algorithm~\ref{alg:current} is upper bounded by 
\begin{align*}
%\EE[(\hat{\mu}(D) - \mu(D))^2]  & \leq 
& \left( \frac{2(u-l)^2}{|D|^2\varepsilon^2} + \frac{8(\mu-m)^2}{|D|^2\varepsilon^2}\right) \left( 1 + o(1)\right) \\
& \leq   \frac{4(u-l)^2}{|D|^2\varepsilon^2}\left( 1 + o(1)\right),
\end{align*}
where $m = \frac{\ell + u}{2}$.
\end{lemma}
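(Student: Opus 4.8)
The plan has two parts: verifying $\varepsilon$-DP, and bounding the MSE by reducing to Lemma~\ref{lem:technical}. For privacy, I would split the budget evenly. After the shift in Line~2 the entries of $D'$ lie in $[-w/2,w/2]$, so the sum $s$ has add-remove sensitivity $w/2$; releasing $\hat s = s + \Lap(w/\varepsilon)$ is $\tfrac{\varepsilon}{2}$-DP. The count $n$ has sensitivity $1$, so $\hat n = n + \Lap(2/\varepsilon)$ is $\tfrac{\varepsilon}{2}$-DP. Basic composition makes the pair $(\hat s,\hat n)$ $\varepsilon$-DP, and the output $\hat\mu$ is a post-processing of this pair.

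For the MSE, note that $\mu(D) = \mu(D') + m = s/n + m$, so $\hat\mu(D) - \mu(D) = \clip(\hat s/\hat n,[-w/2,w/2]) - s/n$. I would apply Lemma~\ref{lem:technical} with $a = s$, $b = n = |D|$, $Z_a = Z_s$, $Z_b = Z_n$, and $M = w/2 = (u-\ell)/2$; the hypothesis $|a/b| = |\mu(D) - m| \le M$ holds since $\mu(D) \in [\ell,u]$ and $m$ is the midpoint. This yields
\[
\EE[(\hat\mu(D) - \mu(D))^2] \le \EE[C^2] + \EE[F^2] + 2\sqrt{\EE[C^2]\EE[F^2]} + 4M^2\,\Pr(Z_n < -n/2),
\]
with $C,F$ as in the lemma. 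Using independence of $Z_s,Z_n$ and their zero means, $\EE[C^2] = \EE[Z_s^2]/n^2 + (s/n)^2\EE[Z_n^2]/n^2$; plugging in $\EE[Z_s^2] = 2(u-\ell)^2/\varepsilon^2$, $\EE[Z_n^2] = 8/\varepsilon^2$ and $s/n = \mu(D) - m$ gives exactly $\EE[C^2] = \tfrac{2(u-\ell)^2}{|D|^2\varepsilon^2} + \tfrac{8(\mu-m)^2}{|D|^2\varepsilon^2}$, the claimed leading term; in particular $\EE[C^2] \ge 2(u-\ell)^2/(|D|^2\varepsilon^2)$.

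It then remains to check the other three terms are lower order. A routine computation from the second and fourth Laplace moments ($\EE[Z_n^4] = 24(2/\varepsilon)^4$, $\EE[Z_s^2Z_n^2] = \EE[Z_s^2]\EE[Z_n^2]$) gives $\EE[F^2] = O\big((u-\ell)^2/(|D|^4\varepsilon^4)\big)$, hence $\EE[F^2]/\EE[C^2] = O(1/(|D|^2\varepsilon^2))$; and $4M^2\Pr(Z_n < -n/2) = \tfrac{(u-\ell)^2}{2}e^{-|D|\varepsilon/4}$, which after dividing by $\EE[C^2]$ is $O(|D|^2\varepsilon^2 e^{-|D|\varepsilon/4})$. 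Since $|D| \ge n_0$ and $n_0\varepsilon \to \infty$, all these ratios (and hence also $2\sqrt{\EE[F^2]/\EE[C^2]}$) tend to $0$, uniformly over $D \in \cD_{\ge n_0}(\ell,u)$, so the right-hand side is $\EE[C^2](1+o(1))$ — the first inequality. The second inequality follows from $(\mu - m)^2 \le ((u-\ell)/2)^2$, whence $2(u-\ell)^2 + 8(\mu-m)^2 \le 4(u-\ell)^2$.

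The main obstacle is not any single hard estimate but making the $(1+o(1))$ honest and uniform: when $\mu = m$ the second term of $\EE[C^2]$ vanishes, so the error terms must be compared against the bare $2(u-\ell)^2/(|D|^2\varepsilon^2)$, which is exactly where the $|D|\varepsilon \ge n_0\varepsilon \to \infty$ assumption is needed; and one should observe that every error ratio depends on $D$ only through $|D| \ge n_0$, which is what delivers uniformity over the dataset class. A secondary bookkeeping point is confirming that the clipping range $[-w/2,w/2]$ in Algorithm~\ref{alg:current} matches the interval $[-M,M]$ implicit in Lemma~\ref{lem:technical} with $M = w/2$.
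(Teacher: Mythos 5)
Your proposal is correct and follows essentially the same route as the paper: apply Lemma~\ref{lem:technical} with $a=s$, $b=n$, $Z_a=Z_s$, $Z_b=Z_n$, $M=(u-\ell)/2$, compute $\EE[C^2]$ exactly as the leading term, and show $\EE[F^2]$, the cross term, and the tail term $4M^2\Pr(Z_n<-n/2)$ are lower order under $n_0\varepsilon\to\infty$. Your added details on the privacy budget split and on the uniformity of the $o(1)$ only make explicit what the paper leaves implicit.
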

We provide the proof in Appendix~\ref{app:current}. We finally prove the upper bound on the mean squared error of Algorithm~\ref{alg:new} when the noise distribution is Laplace.
\begin{theorem}
Let $Z_1$ and $Z_2$ be sampled from independent Laplace distributions with parameter $\epsilon$. Then 
the output of Algorithm~\ref{alg:new} is $\varepsilon$-differentially private. Furthermore, for any dataset $D \in \cD^*(\ell, u)$, 
the mean squared error of Algorithm~\ref{alg:new} is upper bounded by \begin{align*}
% \EE[(\hat{\mu} - \mu)^2]  & \leq 
& \left( \frac{(u-l)^2}{|D|^2\varepsilon^2} + \frac{4(\mu-m)^2}{|D|^2\varepsilon^2}\right) \left( 1 + o(1)\right) \\
& \leq  \frac{2(u-l)^2}{|D|^2\varepsilon^2} \left( 1 + o(1)\right),
\end{align*}
where $m = \frac{\ell+u}{2}$.
\label{thm:new}
\end{theorem}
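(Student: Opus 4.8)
The plan is to follow the template of Lemma~\ref{lem:current}: first settle privacy by an $\ell_1$-sensitivity calculation, then reduce the mean squared error bound to a single application of Lemma~\ref{lem:technical} under a reparametrization chosen so that (i) the clipped ratio in Algorithm~\ref{alg:new} matches the form $\frac{a+Z_a}{b+Z_b}$ with clipping to $[-M,M]$, and (ii) the correlation between the numerator and denominator noise is made explicit.

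For privacy, view the pair $(s_1,s_2)$ produced in Algorithm~\ref{alg:new} as a vector-valued query $g(D)$. Adding or removing a single record $x\in[\ell,u]$ changes $g$ by $\pm(\tfrac{x-\ell}{w},\,1-\tfrac{x-\ell}{w})$; since $x-\ell\in[0,w]$, both coordinates of this change lie in $[0,1]$ and sum to exactly $1$, so the $\ell_1$ sensitivity of $g$ is $1$. Hence releasing $(\hat s_1,\hat s_2)=g(D)+(Z_1,Z_2)$ with $Z_1,Z_2\sim\Lap(1/\varepsilon)$ i.i.d.\ is $\varepsilon$-DP by the Laplace mechanism, and $\hat\mu$ is a post-processing of this pair, hence also $\varepsilon$-DP.

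For utility, write $n=|D|$ and note $s_1+s_2=n$ and $s_1/n=(\mu-\ell)/w\in[0,1]$, so the estimator before clipping is $\frac{\hat s_1}{\hat s_1+\hat s_2}=\frac{s_1+Z_1}{n+Z_1+Z_2}$. Centering gives $\frac{\hat s_1}{\hat s_1+\hat s_2}-\tfrac12=\frac{(s_1-n/2)+(Z_1-Z_2)/2}{n+(Z_1+Z_2)}$, so I would apply Lemma~\ref{lem:technical} with $a=s_1-n/2$, $b=n$, $Z_a=(Z_1-Z_2)/2$, $Z_b=Z_1+Z_2$, and $M=\tfrac12$ (valid since $|a|/b=|s_1/n-\tfrac12|\le\tfrac12$); clipping to $[-\tfrac12,\tfrac12]$ then corresponds to clipping $\frac{\hat s_1}{\hat s_1+\hat s_2}$ to $[0,1]$, and one gets $\EE[(\hat\mu-\mu)^2]=w^2\,\EE[(\clip(\tfrac{a+Z_a}{b+Z_b})-\tfrac ab)^2]$. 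The key computation is $\EE[C^2]$: since $a/b=(\mu-m)/w$, we have $C=\frac{Z_1-Z_2}{2n}-\frac{(\mu-m)(Z_1+Z_2)}{wn}$, and because $\EE[(Z_1-Z_2)(Z_1+Z_2)]=\EE[Z_1^2]-\EE[Z_2^2]=0$ the cross term vanishes, leaving $w^2\EE[C^2]=\frac{(u-\ell)^2}{n^2\varepsilon^2}+\frac{4(\mu-m)^2}{n^2\varepsilon^2}$ --- exactly the claimed leading term. This step is the heart of the proof and the source of the factor-two improvement over Algorithm~\ref{alg:current}.

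It then remains to verify that the other terms in Lemma~\ref{lem:technical} are lower order. Since $Z_1,Z_2\sim\Lap(1/\varepsilon)$ have bounded second and fourth moments, $\EE[F^2]=O(1/(n^4\varepsilon^4))$, so $w^2\EE[F^2]$ and the cross term $2w^2\sqrt{\EE[C^2]\EE[F^2]}=O(w^2/(n^3\varepsilon^3))$ are both $o(w^2/(n^2\varepsilon^2))$ as $n\varepsilon\to\infty$; and $4M^2\Pr(Z_b<-b/2)\le\Pr(|Z_1|+|Z_2|>n/2)=O(e^{-n\varepsilon/4})$ contributes $O(w^2 e^{-n\varepsilon/4})=o(w^2/(n^2\varepsilon^2))$. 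Summing yields the first displayed inequality, and the second follows from $|\mu-m|\le(u-\ell)/2$. The only genuine obstacle is finding the reparametrization $(Z_a,Z_b)=((Z_1-Z_2)/2,\,Z_1+Z_2)$ that simultaneously reproduces the algorithm's clipped ratio and kills the cross term in $\EE[C^2]$; after that the argument is essentially the bookkeeping of Lemma~\ref{lem:current}.
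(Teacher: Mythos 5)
Your proof is correct and follows essentially the same route as the paper: privacy via the $\ell_1$ sensitivity of the $(s_1,s_2)$ query plus post-processing, and the MSE via Lemma~\ref{lem:technical} applied with $b=n$ and $Z_b=Z_1+Z_2$, with the same lower-order bookkeeping for $\EE[F^2]$, the cross term, and the tail event. The only (cosmetic) difference is your centering $a=s_1-n/2$, $Z_a=(Z_1-Z_2)/2$, $M=\tfrac12$, which makes $\EE[Z_aZ_b]=0$ explicit, whereas the paper takes $a=s_1$, $Z_a=Z_1$, $M=1$ and expands the correlated cross term directly; both give the identical leading term $\frac{(u-\ell)^2}{n^2\varepsilon^2}+\frac{4(\mu-m)^2}{n^2\varepsilon^2}$.
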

\begin{proof}
Let $s = [s_1, s_2]$ be a two-dimensional vector. Let $s'$ and $s''$ be vectors corresponding to two neighboring datasets. Then the $\ell_1$ sensitivity is bounded by $\|s' - s''\|_1 \leq 1$.

In lines $5$ and $6$ of the algorithm, we add $\lap(1/\varepsilon)$ noise to each coordinate. Hence $[\hat{s}_1, \hat{s}_2]$ is an $\varepsilon$ differentially private vector and, by the post-processing lemma, the output $\hat{\mu}$ is $\varepsilon$ differentially private.

The proof of the error bound relies heavily on Lemma~\ref{lem:technical}. Let $n = |D|$. Since clipping only reduces the error,
\[
\EE[(\hat{\mu} - \mu)^2]  \leq (u - \ell)^2 \EE \left[ \left(\frac{ \hat s_1}{\hat s_1 + \hat s_2} - \frac{s_1}{n}  \right)^2\right].
\]
Let $a = s_1$, $b = n$, $Z_a = Z_{1} \sim \lap(1/\varepsilon)$, $Z_b = Z_1 + Z_2$, and $M = 1$. To usefully apply Lemma~\ref{lem:technical}, we need bounds for $\EE[C^2]$, $\EE[F^2]$, and $\Pr(Z_b < -b/2)$. We bound each of these below.
\begin{align}
\label{eq:expansion_mse}
\EE[C^2] & = \EE\left[\left(\frac{Z_1}{n} - \frac{(\frac{s_1}{n}) (Z_1 + Z_2)}{n} \right)^2 \right] \nonumber\\
 & =\frac{1}{n^2}\EE\left[\left(\left((1 - \frac{s_1}{n}\right)Z_1 - \frac{s_1}{n}Z_2 \right)^2 \right] \nonumber \\
 & = \frac{1}{n^2\varepsilon^2} + \frac{4(\frac{s_1}{n} - \frac{1}{2}))^2}{n^2\varepsilon^2}.
\end{align}
Since $(x+y)^2 \leq 2 x^2 + 2 y^2$,
\begin{align*}
\EE[F^2]  & \leq \frac{8}{n^4} \EE[(Z_1 + Z_2)^4] + \frac{8}{n^4} \EE[(Z_1 + Z_2)^2 Z^2_1] \\
& = o \left(\frac{1}{n^2\varepsilon^2} \right),
\end{align*}
where the last equality follows from the moments of the Laplace distribution \citep{kotz2012laplace}. Finally,
\begin{align*}
\Pr(Z_b < -b/2)  & \leq \Pr(Z_1 + Z_2 < -n/2) \\
& \leq \Pr(Z_1  < -n/4) + \Pr( Z_2 < -n/4) \\ 
&= o \left(\frac{1}{n^2\varepsilon^2} \right),
\end{align*}
where the last equality follows from the tail bounds of the Laplace distribution.
Combining the three bounds above with Lemma~\ref{lem:technical} and observing the fact that $\mu = \ell + \frac{s_1 w}{n}$ yields the theorem.
\end{proof}
Theorem~\ref{thm:new} implies the following corollary, which shows that, in the high-privacy regime where $\epsilon$ is small, the error of the add-remove model matches the swap model (Equation~\ref{eq:swap}).
\begin{corollary}
\[
R_{\ar}(\varepsilon, \ell, u) \leq \frac{2(u-l)^2}{\varepsilon^2}\left( 1 + o(1)\right).
\]
\end{corollary}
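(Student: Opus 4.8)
The plan is to obtain the corollary directly from Theorem~\ref{thm:new}: it is essentially a matter of (i) bounding the dataset-dependent error of Algorithm~\ref{alg:new} uniformly over the relevant class of datasets, (ii) normalizing by $|D|^2$, and (iii) taking the infimum over all admissible algorithms. First I would note that Algorithm~\ref{alg:new} belongs to $\cA^{\ar}_\varepsilon$, since Theorem~\ref{thm:new} establishes that its output is $\varepsilon$-differentially private under the add-remove model. Consequently $R_{\ar}(\varepsilon,\ell,u) \leq \sup_{D \in \cD_{\geq n_0}(\ell,u)} L_{\text{norm}}(\text{Algorithm~\ref{alg:new}}, D)$, and it suffices to bound the right-hand side.

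For the per-dataset bound, I would invoke the first inequality of Theorem~\ref{thm:new} together with the elementary fact that $\mu(D) \in [\ell,u]$ forces $|\mu(D) - m| \leq (u-\ell)/2$, where $m = \frac{\ell+u}{2}$. This gives $4(\mu(D)-m)^2 \leq (u-\ell)^2$, so that $\frac{(u-\ell)^2}{|D|^2\varepsilon^2} + \frac{4(\mu(D)-m)^2}{|D|^2\varepsilon^2} \leq \frac{2(u-\ell)^2}{|D|^2\varepsilon^2}$. Multiplying through by $|D|^2$ yields $L_{\text{norm}}(\text{Algorithm~\ref{alg:new}}, D) \leq \frac{2(u-\ell)^2}{\varepsilon^2}(1+o(1))$ for every fixed $D$.

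The only point requiring a little care is that the $o(1)$ term must be uniform over all $D$ with $|D| \geq n_0$, so that the bound survives the supremum. Tracing the proof of Theorem~\ref{thm:new}, each lower-order contribution — the term $\EE[F^2]$, the tail probability $\Pr(Z_b < -b/2)$, and the cross term $2\sqrt{\EE[C^2]\EE[F^2]}$ from Lemma~\ref{lem:technical} — is, for fixed $\varepsilon$, a decreasing function of $n = |D|$ (the relevant quantities are $\EE[Z^4]/n^4$ scaled against $1/n^2$, and a Laplace tail in $n$). Hence replacing $n$ by its worst case $n_0$ bounds all of them simultaneously by a single quantity that is $o(1)$ in the stated regime ($n_0 \to \infty$, $\varepsilon \to 0$, $n_0\varepsilon \to \infty$). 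Taking the supremum over $D \in \cD_{\geq n_0}(\ell,u)$ and then the infimum over $\cA^{\ar}_\varepsilon$ gives $R_{\ar}(\varepsilon,\ell,u) \leq \frac{2(u-\ell)^2}{\varepsilon^2}(1+o(1))$, which is the claimed corollary. The main (and rather minor) obstacle is precisely this uniformity check; everything else is an immediate consequence of Theorem~\ref{thm:new}.
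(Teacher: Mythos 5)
Your proposal is correct and matches the paper's (implicit) argument: the corollary is stated as an immediate consequence of Theorem~\ref{thm:new}, whose second displayed inequality already performs the $|\mu - m| \leq (u-\ell)/2$ step, after which one normalizes by $|D|^2$, takes the supremum over $D$, and uses Algorithm~\ref{alg:new} as a witness in the infimum. Your extra uniformity check on the $o(1)$ term is a reasonable piece of diligence but does not change the route.
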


\section{Generalization to all values of $\epsilon$}
\label{sec:all}
In this section we design an optimal $\epsilon$-DP mechanism for private mean estimation in the add-remove model for \emph{any} $\epsilon$, dropping the high-privacy assumption. We prove the optimality of the new mechanism with respect to $R_{\ar}(\epsilon, \ell, u)$, and show that the optimal min-max error for the add-remove model is equivalent to that for the swap model, for any $\epsilon$, up to a $(1+o(1))$ constant factor.

We first motivate the new mechanism, which we call the \emph{\algname} mechanism. Observe that, in Algorithm \ref{alg:new}, $s_1 + s_2$ always sums to an integer, and hence the sensitivity space of $(s_1, s_2)$ is just the two bold segments on the left graph in Figure \ref{fig:unit_balls}. Previously, we used the Laplace mechanism to protect the convex hull of these segments, but this is (still) a strict superset of the actual sensitivity space, which is non-convex. Figure~\ref{fig:hourglass} shows how many points in the $\ell_1$ unit ball actually cannot be reached by taking a single step to a neighboring database. The \algname mechanism is designed to protect this sensitivity space more precisely, allowing for less noise. We formalize this notion and show that when the \algname mechanism is used in Algorithm \ref{alg:new} to noise $[s_1, s_2]$, the result is an optimal private mean estimator for all values of $\epsilon$.

\begin{figure}
\begin{subfigure}{0.46\textwidth}
    \centering
    \includegraphics[scale=0.12]{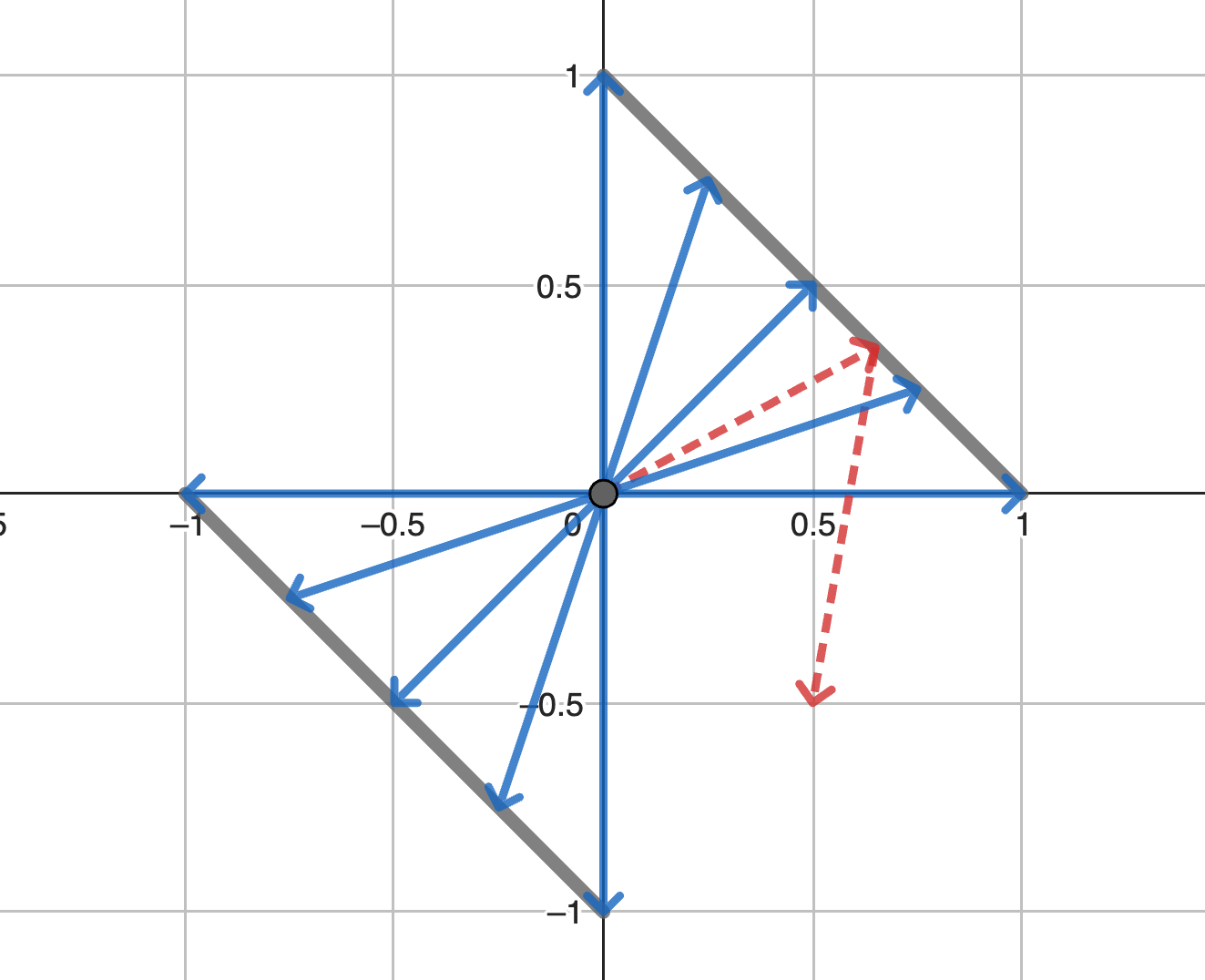}
    \caption{The ``hourglass'' sensitivity space. Blue solid arrows show how, when moving to a neighboring database, the origin can shift only to points along the line segments $(0,1)$ to $(1,0)$ and $(0,-1)$ to $(-1,0)$. Red dashed arrows show how points like $(0.5, -0.5)$ are only reachable via a chain of two neighboring steps, even though though they lie within the $\ell_1$ unit ball.}
    \label{fig:hourglass}
\end{subfigure}
\quad
\quad
\begin{subfigure}{0.46\textwidth}
    \centering
    \includegraphics[scale=0.62]{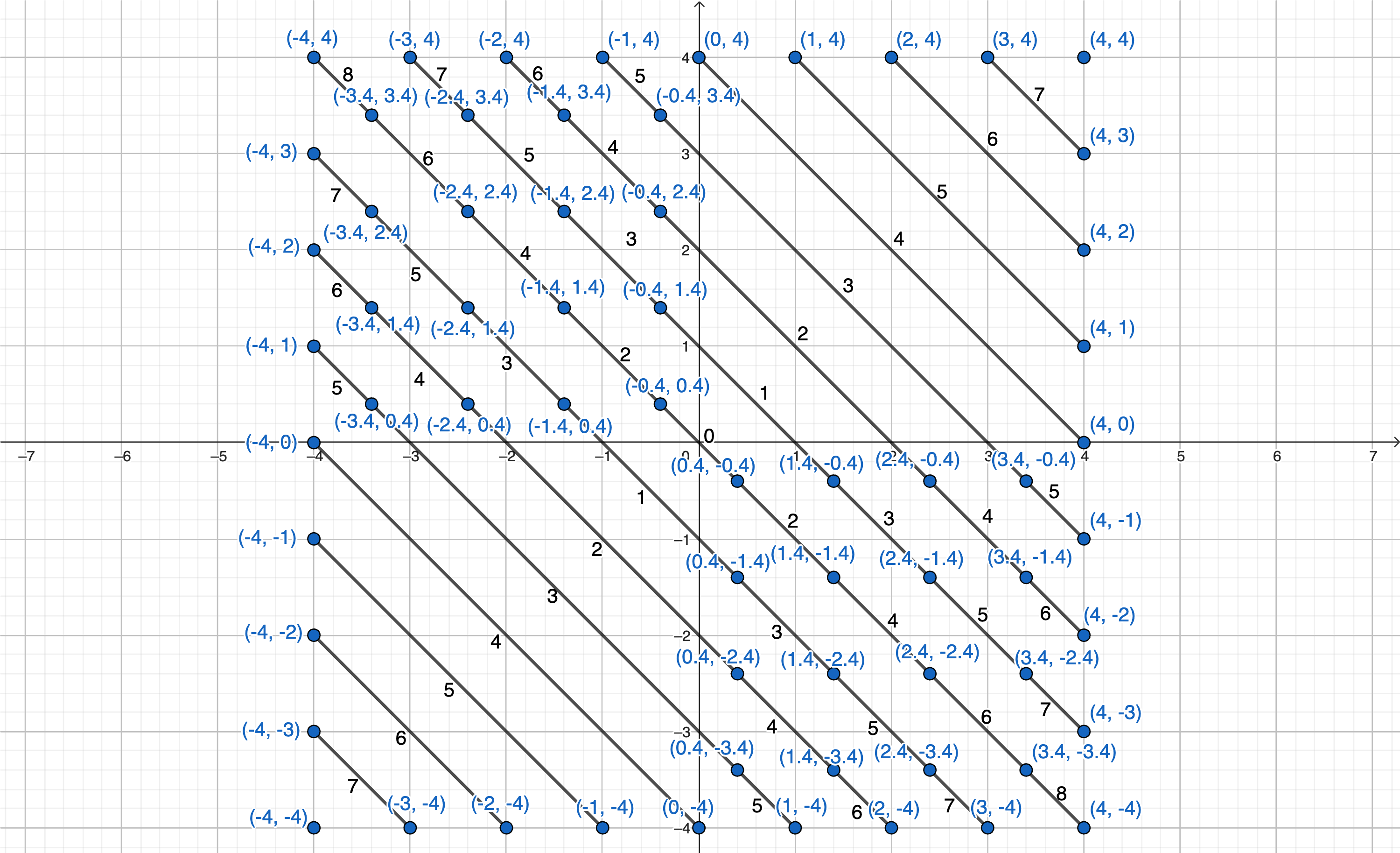}
    \caption{Illustration of the \algname distribution for $\Delta = 1$ and $\gamma=0.4$ in the range $[-4, 4]^2$. All points on a line segment marked $j$ have density proportional to $e^{-j\epsilon}$.}
    \label{fig:optimal_2d_staircase}
\end{subfigure}
\end{figure}

In particular, it is known that for privately computing one-dimensional sums (such as $s_1$ or $s_2$) the optimal mechanism is the staircase mechanism \cite{geng2014optimal}. The hourglass mechanism is constructed so that its marginal distributions both exactly match the univariate hourglass mechanism, with no change in $\epsilon$.

\subsection{\capalgname mechanism}
\label{sec:hour}

The \algname mechanism adds two-dimensional noise drawn from a distribution parameterized by $\gamma \in (0, 1]$. Its density $f_\gamma(x, y)$ is supported on $\{(x,y) : x+y= k\Delta, k \in \mathbb{Z}\}$.
For $x \geq 0$, we divide each diagonal line $x+y = k\Delta$ into regions according to the value of $x$. For integers $i\geq 1$:
\begin{align*}
A_k &: x \in [0, (k + \gamma)\Delta)\\
B_k(i) &: x \in [(k + \gamma + i - 1)\Delta, (k + \gamma + i)\Delta)
\end{align*}
Note that, for $k < 0$, $A_k$ is always empty, and $B_k(i)$ does not contain any points with $x \geq 0$ unless $i \geq -k$.

The density of the \algname noise distribution is given by
\begin{equation}
\label{eq:hourglass_prob}
f_\gamma(x,y) \propto
    \begin{cases}
     e^{-k \cdot \epsilon }, & (x, y) \in A_k\\
     e^{-(2i+k) \cdot \epsilon}, & (x, y) \in B_k(i)~.
    \end{cases}
\end{equation}
For $x < 0$, $f_\gamma(x,y) = f_\gamma(-x, -y)$. See Figure \ref{fig:optimal_2d_staircase} for an illustration of the \algname distribution when  $\Delta = 1$ and $\gamma = 0.4$.

Note that the \algname mechanism is different from the natural extension of the univariate staircase mechanism to two dimensions as proposed by \citet{geng2015staircase}. The latter is known to be optimal for an $\ell_1$-ball sensitivity space, but does not have the properties shown in Lemma \ref{lem:marginal_1d_staircase}, which are key to the optimality of the \algname mechanism, and does not perform as well in practice (see Section~\ref{sec:experiments}).

\begin{theorem}\label{thm:hourglass_dp}
Let $q: \cD \to R^2$ be a query such that for any two neighboring datasets $D$ and $D'$,
\[
q(D) - q(D') \in \{(x_0, \Delta-x_0) : x_0 \in [0,\Delta] \}.
\]
Then the \algname mechanism given by
$$q(D) + (Z_1, Z_2),$$ 
where $Z_1, Z_2$ are sampled according to the density $f_\gamma$, is $\epsilon$-DP.
\end{theorem}
\begin{proof}
We provide an intuition based on Figure~\ref{fig:optimal_2d_staircase} here and a detailed proof in Appendix~\ref{app:hour_privacy}. Recall that a segment marked by integer $j$ in Figure~\ref{fig:optimal_2d_staircase} has density proportional to $e^{-j\epsilon}$, and on neighboring databases the noise distribution is effectively shifted by $(x_0, 1-x_0)$ for some $x_0 \in [0,1]$. Thus, to ensure privacy, it must be the case that the integers marking any pair of points $(x,y)$ and $(x+x_0, y+1-x_0)$, which are on adjacent lines in Figure~\ref{fig:optimal_2d_staircase} and not more than one unit apart along either axis, differ by at most one. By construction, this is true across the support of $f_\gamma$.
\end{proof}

We now show that the marginal distributions of the \algname mechanism are univariate staircase mechanisms, which is the key ingredient to proving optimal utility guarantees.
\begin{lemma}[Marginal distribution]
\label{lem:marginal_1d_staircase}
The marginal densities $f_\gamma(x)$ and $f_\gamma(y)$ match the $\epsilon$-DP univariate staircase mechanism with parameter $\gamma$. That is, $f_\gamma(x) \propto e^{-\epsilon \lceil |x| - \gamma \rceil}$, and $f_\gamma(y)\propto e^{-\epsilon \lceil |y| - \gamma \rceil}$. \label{lem:hour_marginal}
\end{lemma}
\begin{proof}
Assume $x \geq 0$, and let $j \geq 0$ be the unique integer for which $x \in [j - 1 + \gamma, j + \gamma)$. Then the marginal density for $x$ is
\begin{align}
f_\gamma(x)
&\propto \sum_{k=-\infty}^{\infty} f_\gamma(x, k-x)\\
&= \sum_{k=-\infty}^{j-1} e^{-(2j-k)\epsilon}
+ \sum_{k=j}^{\infty} e^{-k\epsilon}
\end{align}
The first term follows because $j-k \geq 1$ in this range, and so $x \in B_k(j-k)$. The second term follows because $x \in A_k$ in this range. Summing the geometric series, the above is equal to
$$
e^{-(j+1)\epsilon}/(1-e^{-\epsilon})
+ e^{-j\epsilon}/(1-e^{-\epsilon})
\propto e^{-j\epsilon}~,
$$
which is precisely the staircase density. The densities for $x < 0$ and $y$ are handled symmetrically.
\end{proof}

Next we derive the conditional distribution of $y$ given $x$.
\begin{lemma}[Conditional distribution]
\label{lem:hrglass_conditional}
The conditional probability under $f_\gamma$ of $y$ given a fixed $x$ is a geometric distribution with ratio $e^{-\epsilon}$. In particular, for any $y$ such that $y+x$ is an integer,
$$f_\gamma(Y=y|x) = \frac{1-e^{-\epsilon}}{1+e^{-\epsilon}}e^{-\epsilon |y - y_0(x)|},$$
where
\begin{equation}
    y_0(x) =
    \begin{cases} 
    - x + \lfloor x + (1 - \gamma) \rfloor, & x \geq 0 \\
    -x - \lfloor - x + (1 - \gamma) \rfloor, & x < 0
    \end{cases}
\end{equation}
\end{lemma}
Using the above lemmas, we get a simple sampling algorithm for the \algname distribution: first sample $x$ from a univariate staircase mechanism with $\gamma$ and $\epsilon$, and then sample $y$ from the geometric distribution in Lemma \ref{lem:hrglass_conditional}.

\subsection{Implications for mean estimation}
\label{sec:hour_mean}

Before proceeding to analyze the accuracy of the hourglass mechanism for mean estimation, we first consider the two-dimensional staircase mechanism.
\begin{lemma}
Let $Z_1$ and $Z_2$ be sampled by the two-dimensional staircase mechanism with parameter $\epsilon$ and with parameter $\gamma$ optimized for mean squared error. Then the output of Algorithm~\ref{alg:new} is $\varepsilon$-DP. Furthermore, for any dataset $D \in \cD^*(\ell, u)$, 
the MSE of Algorithm~\ref{alg:new} is upper bounded by \begin{align*}
\left( \frac{(u-l)^2 \widetilde{\sigma}^2(\epsilon)}{|D|^2}\right) \left( 1 + o(1)\right),
\end{align*}
where $\widetilde{\sigma}^2(\epsilon)$ is the variance of the two-dimensional staircase mechanism optimized for MSE.
\label{lem:two_stair_mean}
\end{lemma}
The proof is provided in Appendix~\ref{app:two_stair_mean}. We now state our main upper bound for all values of $\epsilon$. The proof is similar to that of Lemma~\ref{lem:two_stair_mean} together with the fact that the marginal distribution along each dimension is the same as the univariate staircase mechanism (Lemma~\ref{lem:hour_marginal}). We provide the full proof in Appendix~\ref{app:hourglass_main}.

\begin{theorem}
Let $Z_1$ and $Z_2$ be sampled by the \algname mechanism with parameter $\epsilon$ and $\gamma$ as given by \citet[Equation 50]{geng2014optimal}. Then
the output of Algorithm~\ref{alg:new} is $\varepsilon$-differentially private. Furthermore, for any dataset $D \in \cD^*(\ell, u)$, 
the mean squared error of Algorithm~\ref{alg:new} is upper bounded by \begin{align*}
\left( \frac{(u-l)^2 {\sigma}^2(\epsilon)}{|D|^2}\right) \left( 1 + o(1)\right),
\end{align*}
where $\sigma^2(\epsilon)$ is defined in~\eqref{eq:opt_error}.
\label{thm:hourglass_main}
\end{theorem}

\begin{figure*}[h]
    \centering
    \begin{subfigure}[t]{0.3\textwidth}
        \centering
        \includegraphics[scale=0.37]{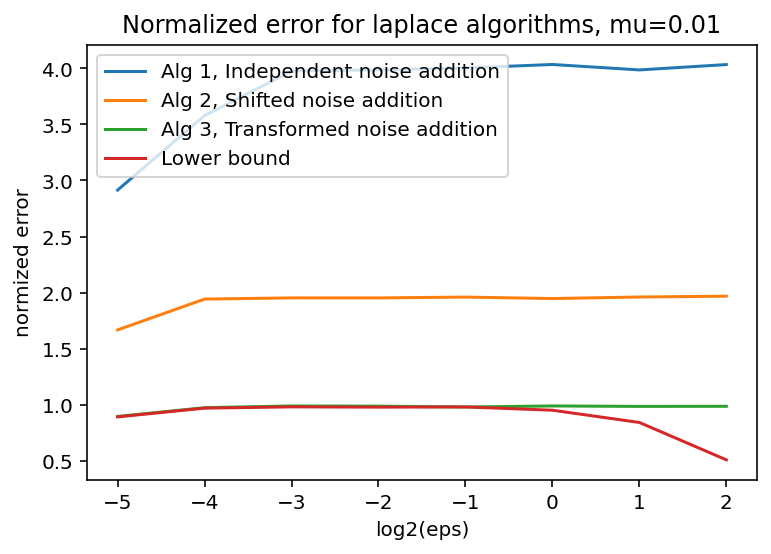}
        \caption{All Laplace mechanisms, high privacy regime, varying $\varepsilon$.}
          \label{fig:high_privacy_eps_vary}
    \end{subfigure}%
    ~~
    \begin{subfigure}[t]{0.3\textwidth}
        \centering
        \includegraphics[scale=0.37]{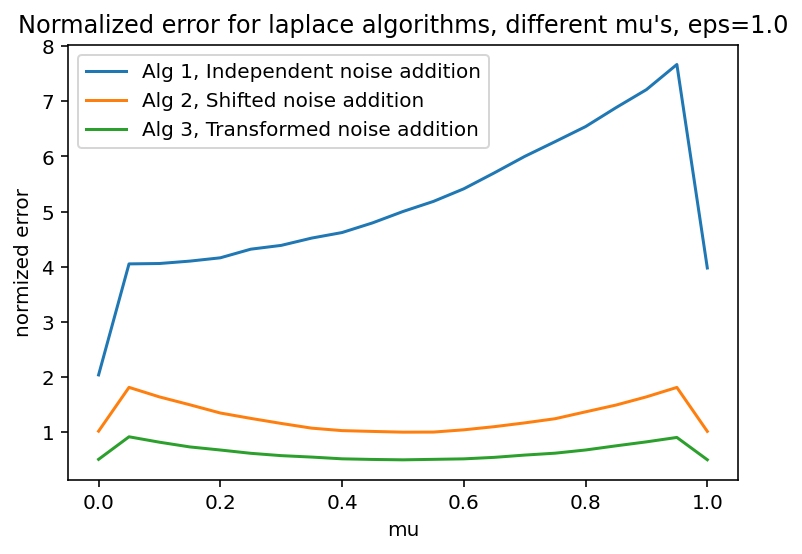}
        \caption{All Laplace mechanisms, high privacy regime, varying $\mu$.}
            \label{fig:high_privacy_mu_vary}
    \end{subfigure}
    ~~
    \begin{subfigure}[t]{0.3\textwidth}
        \centering
        \includegraphics[scale=0.37]{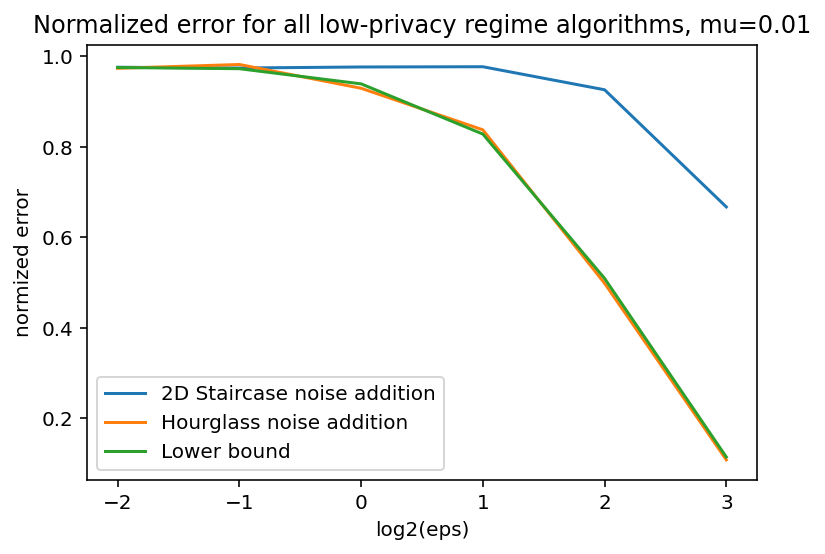}
        \caption{\capalgname and 2D staircase mechanisms, low privacy regime, varying $\varepsilon$.}
           \label{fig:low_privacy_eps_vary}
    \end{subfigure}
    \caption{Error comparison of different algorithms with varying $\varepsilon$ or $\mu$.}
    \label{fig:exps}
\end{figure*}

\section{Lower bound}
\label{sec:lower}

We next state a well-known result of \citet{geng2014optimal} that we will use to prove our lower bound.
\begin{lemma}[{\citet[Section VI.C]{geng2014optimal}}]
\label{lem:conjecture}
Let $k> 0$ and let $\cD^k$ denote the collection of all datasets of size $k$ where each value lies in the range $[0, 1]$. For $D \in \cD^k$, let $S(D)$ denote the sum of the elements in $D$. Let $\hat{S}$ denote an estimator of the same sum. Then, for any $\epsilon$,
\[
\inf_{\hat{S} \in \cA^{\ar}_{\epsilon}} \sup_{D \in \cD^k}  \EE \left[(\hat{S} - S(D))^2\right] \geq \sigma^2(\epsilon)(1- o(1)),
\]
where $\sigma^2(\epsilon)$ is given by~\eqref{eq:opt_error}. Here the $o(1)$ term goes to zero as $k$ tends to infinity.
\end{lemma}
Lemma~\ref{lem:conjecture} provides a lower bound for estimating the sum over datasets of fixed size. We construct a class of datasets $\cD'$ with varying size such that for every dataset in $\cD'$ there exists a corresponding dataset in $\cD^k$ (as defined in Lemma~\ref{lem:conjecture}) with the same sum. Hence, any differentially private mean estimator for the datasets in $\cD'$ can be modified to obtain a differentially private sum estimator for the datasets in $\cD^k$. We use this observation to show the following lower bound on the min-max MSE of mean estimation under the add-remove model of differential privacy.
\begin{theorem}
\[
R_{\ar}(\varepsilon, \ell, u) \geq  (u - \ell)^2 \cdot \sigma^2(\epsilon)(1- o(1)).
\]
\label{thm:lower}
\end{theorem}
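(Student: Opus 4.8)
\textbf{Proof plan for Theorem~\ref{thm:lower}.}

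By Lemma~\ref{lem:zeroone} it suffices to prove $R_{ar}(\varepsilon, 0, 1) \geq \frac{2}{\varepsilon^2}(1 - o(1))$, so I would work on the unit interval throughout. The plan is to reduce the mean estimation problem to the count estimation problem of Lemma~\ref{lem:conjecture}. The key observation is that on the family $\cD^k = \{D_1, \ldots, D_k\}$, where $D_i$ consists of $i$ copies of $1$ and a fixed number $n$ of copies of $0$, the mean is $\mu(D_i) = \frac{i}{n+i}$, which is a smooth, strictly increasing, bijective function of the count $N(D_i) = i$. Thus any $\varepsilon$-DP mean estimator $\hat{\mu}$ can be post-processed into an $\varepsilon$-DP estimator $\hat{N}$ of the count $i$ on this family, and conversely. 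Concretely, given $\hat{\mu}$, define $\hat{N} = \frac{n\hat{\mu}}{1 - \hat{\mu}}$ (clipped to $[0,k]$), which is $\varepsilon$-DP by post-processing.

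The main work is to track how the squared error transforms under this correspondence. Writing $g(t) = \frac{nt}{1-t}$ so that $i = g(\mu(D_i))$, a first-order Taylor expansion gives $\hat{N} - i \approx g'(\mu(D_i))(\hat{\mu} - \mu(D_i))$, with $g'(t) = \frac{n}{(1-t)^2}$. Since $L_{\mathrm{norm}}(\hat\mu, D_i) = (n+i)^2 \EE[(\hat\mu - \mu(D_i))^2]$ and $(1 - \mu(D_i))^{-2} = \frac{(n+i)^2}{n^2}$, one finds $\EE[(\hat N - i)^2] \approx \frac{(n+i)^4}{n^4}\,\EE[(\hat\mu-\mu(D_i))^2] = \frac{(n+i)^2}{n^4} L_{\mathrm{norm}}(\hat\mu, D_i)$. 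To make the supremum over $\cD^k$ comparable to $\sup_{D \in \cD_{\geq n_0}}$, I would choose $k = o(n)$ (e.g.\ $k = \sqrt n$ or $k=n^{2/3}$) and take $n_0 = n$, so that for all $i \le k$ the factor $\frac{(n+i)^2}{n^4}$ lies in $\frac{1}{n^2}(1 \pm o(1))$. This yields
\[
\sup_{D \in \cD_{\geq n_0}(0,1)} L_{\mathrm{norm}}(\hat\mu, D) \;\geq\; n^2(1 - o(1)) \sup_{i \le k} \EE[(\hat N - i)^2] \;\geq\; \frac{2n^2}{\varepsilon^2}(1 - o(1)),
\]
where the last step is Lemma~\ref{lem:conjecture}. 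Wait --- this has an extra $n^2$; the resolution is that the normalized loss $L_{\mathrm{norm}} = |D|^2 L$ already contains the $|D|^2 \approx n^2$ factor, so in fact $L_{\mathrm{norm}}(\hat\mu, D_i) = (n+i)^2 \EE[(\hat\mu-\mu(D_i))^2]$ and $\EE[(\hat N - i)^2] \approx \frac{(n+i)^2}{n^4} L_{\mathrm{norm}}$, giving $L_{\mathrm{norm}}(\hat\mu,D_i) \approx \frac{n^4}{(n+i)^2}\EE[(\hat N-i)^2] \geq n^2(1-o(1)) \cdot \frac{2}{\varepsilon^2}(1-o(1))$ --- this still looks off by $n^2$, so the correct bookkeeping must be that Lemma~\ref{lem:conjecture}'s bound is on the \emph{unnormalized} count error while our target $R_{ar}$ already absorbs the $n^2$; I would reconcile this carefully by recalling that $\frac{i}{n+i}$ and noting $R_{ar}$ is defined with the $|\cD|^2$ normalization so that the right target after the change of variables is exactly $\frac{2}{\varepsilon^2}(1-o(1))$, matching Lemma~\ref{lem:conjecture} after the $\frac{n^4}{(n+i)^2} \cdot \frac{1}{n^2} \to 1$ cancellation when one correctly includes the $(n+i)^{-2}$ from $L_{\mathrm{norm}}$ versus $L$.

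The main obstacle, and the step requiring the most care, is controlling the Taylor remainder: the linearization $\hat N - i \approx g'(\mu)(\hat\mu - \mu)$ is only valid when $\hat\mu$ is close to $\mu(D_i)$, whereas $\hat\mu$ can range over all of $[0,1]$ with some probability. I would handle this by splitting the expectation on the event that $\hat\mu$ lies in a small neighborhood of $\mu(D_i)$ (where the second-order term is lower order, using boundedness of $g''$ on a region bounded away from $1$, which holds since $\mu(D_i) \le \frac{k}{n+k} = o(1)$) versus its complement; on the complement, $|\hat\mu - \mu| = \Theta(1)$ contributes at least a constant to $L(\hat\mu, D_i)$, which is already far larger than the $O(1/n^2)$ target, so such estimators are trivially suboptimal and can be discarded from the infimum. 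A clean alternative that avoids remainder estimates entirely: instead of post-processing $\hat\mu$ into an estimator of the raw count $i$, directly lower bound $\EE[(\hat\mu - \mu(D_i))^2]$ by relating it through the exact (not linearized) inverse — since $t \mapsto \frac{nt}{1-t}$ is bi-Lipschitz with explicit constants on $[0, \frac{k}{n+k}]$ (derivative between $n$ and $n(1+o(1))^2$), we get two-sided control $\EE[(\hat N - i)^2] = (n^2 \pm o(n^2))\EE[(\hat\mu - \mu(D_i))^2]$ without any Taylor expansion, and then invoke Lemma~\ref{lem:conjecture}. I would write the proof using this bi-Lipschitz route as it is the most robust, and conclude by combining with Lemma~\ref{lem:zeroone} and the definition of $R_{ar}$.
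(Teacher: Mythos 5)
Your proposal is correct and, at its core, is the same reduction the paper uses: restrict to $[0,1]$ via Lemma~\ref{lem:zeroone}, take the hard family $\cD^k$ of Lemma~\ref{lem:conjecture}, and post-process an $\varepsilon$-DP mean estimator into an $\varepsilon$-DP count estimator so that the $2/\varepsilon^2$ count lower bound transfers to the normalized mean squared error. The difference is only in the transfer map and its bookkeeping. The paper uses the simpler linear map $\hat N = n\hat\mu$, which does not invert $\mu(D_i)=i/(n+i)$ exactly; it absorbs the mismatch $n\mu(D)-N(D) = -\mu(D)N(D)$ through additive error terms $k^4/n^2 + k^3/n$ controlled by Cauchy--Schwarz, and then takes $k$ small (the paper uses $k = o(n^{1/3}/\varepsilon^{1/3})$) so that these terms are negligible against $2/\varepsilon^2$. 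You instead use the exact inverse $\hat N = n\hat\mu/(1-\hat\mu)$ clipped to $[0,k]$ and control the conversion multiplicatively: since $g(t)=nt/(1-t)$ has derivative between $n$ and $n(1+k/n)^2$ on $[0,k/(n+k)]$, and clipping to an interval containing $\mu(D_i)$ only shrinks the error, one gets $\EE[(\hat N - i)^2]\le n^2(1+o(1))\,\EE[(\hat\mu-\mu(D_i))^2]$ for any $k=o(n)$ with $k\to\infty$, which is the only direction you actually need (the two-sided ``equality'' you assert is not quite true after clipping, but is harmless since it is unused). Your route buys a cleaner error analysis and a much weaker constraint on $k$, at the cost of handling the clipping/bi-Lipschitz step explicitly; it also supersedes the confusion in your middle paragraph, where the spurious ``extra $n^2$'' came from the slip $g'(\mu)^2=(n+i)^4/n^4$ in place of the correct $(n+i)^4/n^2$, after which $L_{\text{norm}}(\hat\mu,D_i)=(n+i)^2\EE[(\hat\mu-\mu(D_i))^2]\ge (1-o(1))\EE[(\hat N-i)^2]\ge \tfrac{2}{\varepsilon^2}(1-o(1))$ follows directly.
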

Combining Theorem~\ref{thm:hourglass_main} together with Theorem~\ref{thm:lower} and~\eqref{eq:opt_error} yields the following result.
\begin{corollary}
\label{cor:hourglass_main}
For any $\epsilon > 0$, $R_{\ar}(\epsilon,  \ell, u)$ is equal to
$$ \frac{2^{-2/3} e^{-2\epsilon/3}(1+e^{-\epsilon})^{2/3} + e^{-\epsilon}}{(1-e^{-\epsilon})^2} \left( 1 \pm o(1) \right).$$
Furthermore, the \algname mechanism with suitable value of $\gamma$ achieves this normalized MSE.
\end{corollary}

\section{Experiments}
\label{sec:experiments}

In Figure \ref{fig:exps} we plot the empirical performance of the algorithms discussed in Sections~\ref{sec:high} and \ref{sec:hour} on synthetic datasets and explore how the performance changes with parameters such as the privacy budget $\varepsilon$ and the true mean $\mu$. The underlying datasets are generated i.i.d. with varying $\mu$ in the range $[\ell =0, u = 1]$. All datasets have $10,000$ points, and mean squared error is computed over $100,000$ runs of each algorithm. The errors are normalized by $\frac{|D|^2 \epsilon^2}{2}$ to keep them in a similar range across a wide range of parameters.

Figures~\ref{fig:high_privacy_eps_vary} and \ref{fig:high_privacy_mu_vary} compare the performance of Algorithms \ref{alg:very_old}, \ref{alg:current} and \ref{alg:new} using the Laplace mechanism, varying $\varepsilon$ and $\mu$, respectively. These plots focus on the high-privacy regime, where Algorithm \ref{alg:new} was shown to be optimal. Indeed, Algorithm~\ref{alg:new} outperforms the others, reducing error over Algorithms~\ref{alg:current} by roughly a factor of two, which matches the ratio of upper bounds in Lemma~\ref{lem:current} and Theorem~\ref{thm:new}. Similarly, in Figure~\ref{fig:high_privacy_eps_vary}, Algorithm~\ref{alg:new} closely matches the lower bound $\sigma^2(\epsilon)$ from \eqref{eq:opt_error} when $\epsilon < 1$.

In Figure~\ref{fig:high_privacy_mu_vary}, we explore how the error changes with the true mean of the data. As analyzed in \eqref{eq:expansion_mse}, the error is largest when $\mu$ approaches $0$ or $1$. However, the error also drops when $\mu$ becomes very close to $0$ or $1$, since $\frac{\hat{s_1}}{\hat{n}}$ falls out of the range $[\ell, u]$ with probability approaching 50\%. In these cases the clipping operation significantly reduces the mean squared error.

Finally, we compare the two-dimensional staircase mechanism to Algorithm~\ref{alg:new} with noise from the \algname mechanism in Figure \ref{fig:low_privacy_eps_vary}, focusing on the low-privacy regime where $\epsilon$ is large. The lower bound in Figure~\ref{fig:low_privacy_eps_vary} is again $\sigma^2(\epsilon)$, but notice now that as $\epsilon$ grows, the \algname mechanism continues to match the lower-bound indefinitely, and increasingly outperforms the two-dimensional staircase mechanism. In Figure~\ref{fig:low_privacy_mu_vary}  (Appendix~\ref{app:experiments}), we show that both mechanisms also have a similar M-shaped error curve over $\mu$. In Appendix~\ref{app:experiments}, we provide additional experiments to demonstrate that the proposed algorithm outperforms existing algorithms on small datasets.

\conf{\section{Impact statement}
% This paper presents work whose goal is to advance the field of Machine Learning. There are many potential societal consequences of our work, none which we feel must be specifically highlighted here.
}
\arxiv{
\section{
 Acknowledgements}
The authors thank Travis Dick, Gautam Kamath, and Ziteng Sun for helpful comments and suggestions.}
\bibliographystyle{abbrvnat}
\bibliography{references}

\begin{thebibliography}{17}
\providecommand{\natexlab}[1]{#1}
\providecommand{\url}[1]{\texttt{#1}}
\expandafter\ifx\csname urlstyle\endcsname\relax
  \providecommand{\doi}[1]{doi: #1}\else
  \providecommand{\doi}{doi: \begingroup \urlstyle{rm}\Url}\fi

\bibitem[Amin et~al.(2019)Amin, Kulesza, Munoz, and
  Vassilvtiskii]{amin2019bounding}
K.~Amin, A.~Kulesza, A.~Munoz, and S.~Vassilvtiskii.
\newblock Bounding user contributions: A bias-variance trade-off in
  differential privacy.
\newblock In \emph{International Conference on Machine Learning}, pages
  263--271. PMLR, 2019.

\bibitem[Amin et~al.(2022)Amin, Gillenwater, Joseph, Kulesza, and
  Vassilvitskii]{amin2022plume}
K.~Amin, J.~Gillenwater, M.~Joseph, A.~Kulesza, and S.~Vassilvitskii.
\newblock Plume: differential privacy at scale.
\newblock \emph{arXiv preprint arXiv:2201.11603}, 2022.

\bibitem[Dick et~al.(2023)Dick, Kulesza, Sun, and Suresh]{dick2023subset}
T.~Dick, A.~Kulesza, Z.~Sun, and A.~T. Suresh.
\newblock Subset-based instance optimality in private estimation.
\newblock \emph{arXiv preprint arXiv:2303.01262}, 2023.

\bibitem[Dwork et~al.(2006)Dwork, McSherry, Nissim, and
  Smith]{dwork2006calibrating}
C.~Dwork, F.~McSherry, K.~Nissim, and A.~Smith.
\newblock Calibrating noise to sensitivity in private data analysis.
\newblock In \emph{Theory of Cryptography: Third Theory of Cryptography
  Conference, TCC 2006, New York, NY, USA, March 4-7, 2006. Proceedings 3},
  pages 265--284. Springer, 2006.

\bibitem[Dwork et~al.(2014)Dwork, Roth, et~al.]{dwork2014algorithmic}
C.~Dwork, A.~Roth, et~al.
\newblock The algorithmic foundations of differential privacy.
\newblock \emph{Foundations and Trends{\textregistered} in Theoretical Computer
  Science}, 9\penalty0 (3--4):\penalty0 211--407, 2014.

\bibitem[Geng and Viswanath(2014)]{geng2014optimal}
Q.~Geng and P.~Viswanath.
\newblock The optimal mechanism in differential privacy.
\newblock In \emph{2014 IEEE international symposium on information theory},
  pages 2371--2375. IEEE, 2014.

\bibitem[Geng et~al.(2015)Geng, Kairouz, Oh, and Viswanath]{geng2015staircase}
Q.~Geng, P.~Kairouz, S.~Oh, and P.~Viswanath.
\newblock The staircase mechanism in differential privacy.
\newblock \emph{IEEE Journal of Selected Topics in Signal Processing},
  9\penalty0 (7):\penalty0 1176--1184, 2015.

\bibitem[Hardt and Talwar(2010)]{hardt2010geometry}
M.~Hardt and K.~Talwar.
\newblock On the geometry of differential privacy.
\newblock In \emph{Proceedings of the forty-second ACM symposium on Theory of
  computing}, pages 705--714, 2010.

\bibitem[Huang et~al.(2021)Huang, Liang, and Yi]{huang2021instance}
Z.~Huang, Y.~Liang, and K.~Yi.
\newblock Instance-optimal mean estimation under differential privacy.
\newblock \emph{Advances in Neural Information Processing Systems},
  34:\penalty0 25993--26004, 2021.

\bibitem[Kamath et~al.(2020)Kamath, Singhal, and Ullman]{kamath2020private}
G.~Kamath, V.~Singhal, and J.~Ullman.
\newblock Private mean estimation of heavy-tailed distributions.
\newblock In J.~Abernethy and S.~Agarwal, editors, \emph{Proceedings of Thirty
  Third Conference on Learning Theory}, volume 125 of \emph{Proceedings of
  Machine Learning Research}, pages 2204--2235. PMLR, 09--12 Jul 2020.

\bibitem[Kamath et~al.(2023)Kamath, Mouzakis, Regehr, Singhal, Steinke, and
  Ullman]{kamath2023bias}
G.~Kamath, A.~Mouzakis, M.~Regehr, V.~Singhal, T.~Steinke, and J.~Ullman.
\newblock A bias-variance-privacy trilemma for statistical estimation.
\newblock \emph{arXiv preprint arXiv:2301.13334}, 2023.

\bibitem[Kotz et~al.(2012)Kotz, Kozubowski, and Podgorski]{kotz2012laplace}
S.~Kotz, T.~Kozubowski, and K.~Podgorski.
\newblock \emph{The Laplace distribution and generalizations: a revisit with
  applications to communications, economics, engineering, and finance}.
\newblock Springer Science \& Business Media, 2012.

\bibitem[Liu et~al.(2021)Liu, Kong, Kakade, and Oh]{liu2021robust}
X.~Liu, W.~Kong, S.~Kakade, and S.~Oh.
\newblock Robust and differentially private mean estimation.
\newblock \emph{Advances in neural information processing systems},
  34:\penalty0 3887--3901, 2021.

\bibitem[McSherry(2009)]{mcsherry2009privacy}
F.~D. McSherry.
\newblock Privacy integrated queries: an extensible platform for
  privacy-preserving data analysis.
\newblock In \emph{Proceedings of the 2009 ACM SIGMOD International Conference
  on Management of data}, pages 19--30, 2009.

\bibitem[Rogers et~al.(2020)Rogers, Subramaniam, Peng, Durfee, Lee, Kancha,
  Sahay, and Ahammad]{rogers2020linkedin}
R.~Rogers, S.~Subramaniam, S.~Peng, D.~Durfee, S.~Lee, S.~K. Kancha, S.~Sahay,
  and P.~Ahammad.
\newblock Linkedin's audience engagements api: A privacy preserving data
  analytics system at scale.
\newblock \emph{arXiv preprint arXiv:2002.05839}, 2020.

\bibitem[Vadhan(2017)]{vadhan2017complexity}
S.~Vadhan.
\newblock The complexity of differential privacy.
\newblock \emph{Tutorials on the Foundations of Cryptography: Dedicated to Oded
  Goldreich}, pages 347--450, 2017.

\bibitem[Wilson et~al.(2019)Wilson, Zhang, Lam, Desfontaines, Simmons-Marengo,
  and Gipson]{wilson2019differentially}
R.~J. Wilson, C.~Y. Zhang, W.~Lam, D.~Desfontaines, D.~Simmons-Marengo, and
  B.~Gipson.
\newblock Differentially private sql with bounded user contribution.
\newblock \emph{arXiv preprint arXiv:1909.01917}, 2019.

\end{thebibliography}
\newpage
\onecolumn
\appendix

\section{Properties of staircase mechanisms}
We first define both the one-dimensional and two-dimensional staircase mechanisms.
\begin{definition}[Univariate staircase mechanism]
\label{def:1d_stair}
Let $\Delta$ be the sensitivity of the underlying query. The univariate staircase mechanism \cite{geng2014optimal} is parameterized by $\gamma \in [0, 1]$ and is given by
\begin{equation}
f_\gamma(x) = 
    \begin{cases}
     a(\gamma) , & x \in [0, \gamma \Delta)\\
     a(\gamma) e^{-\epsilon} , & x \in [\gamma \Delta,\Delta)\\
     e^{-k \epsilon} f_\gamma(x - k \Delta), & x \in [k \Delta,(k+1)\Delta) \\
     f_\gamma(-x), & x < 0.
    \end{cases}
\end{equation}
Here $a(\gamma)$ is the normalization factor and given by
\[
a(\gamma) = \frac{1-e^{-\epsilon}}{2 \Delta \left(\gamma + e^{-\epsilon}(1 - \gamma) \right)}.
\]
\end{definition}
\begin{definition}[Two-dimensional staircase mechanism]
\label{def:2d_stair}
Let $\Delta$ be the sensitivity of the underlying query.  The two-dimensional staircase mechanism \cite{geng2015staircase} is parameterized by $\gamma \in [0, 1]$ and is given by
\begin{equation}
f_\gamma(x,y) = 
    \begin{cases}
     a(\gamma) e^{-\epsilon k}, & |x| + |y| \in [k\Delta, (k + \gamma)\Delta)\\
    a(\gamma) e^{-\epsilon (k+1)}, & |x| + |y| \in[(k+\gamma)\Delta + k+1)\Delta),
    \end{cases}
\end{equation}
Here $a(\gamma)$ is the normalization factor and given by
\[
a(\gamma) = \frac{(1-e^{-\epsilon})^2}{2 \left( e^{-\epsilon}  (e^{-\epsilon} + (1-e^{-\epsilon})\gamma) + (1-e^{-\epsilon}) (e^{-\epsilon} + (1-e^{-\epsilon})\gamma^2) \right)}.
\]

\end{definition}
We prove the following result for the MSE of the two-dimensional staircase mechanism.
\begin{lemma}
\label{lem:two_stair}
Let $\epsilon \geq \epsilon_0$ for a sufficiently large value of $\epsilon_0$. Then, for the two-dimensional stair-case mechanism,
\[
\min_{\gamma} \EE_{f_{\gamma}}[x^2] = 
\min_{\gamma} \EE_{f_{\gamma}}[y^2] = \Delta^2 \cdot \Theta \left( e^{-\epsilon/2} \right),
\]
\end{lemma}
\begin{proof}
Without loss of generality, we assume the sensitivity to be one. By symmetry,
\[
\min_{\gamma} \EE_{f_{\gamma}}[x^2] = 
\min_{\gamma} \EE_{f_{\gamma}}[y^2],
\]
and it suffices to consider $ \EE_{f_{\gamma}}[x^2]$. For $k \geq 1$, it can be shown that
\[
\int_{|x| + |y| \in [k, k+1)} x^2 dx dy = \Theta \left( k^3 \right).
\]
Similarly,
\[
\int_{|x| + |y| \in [\gamma, 1)} x^2 dx dy = \Theta \left(1\right),
\]
and
\[
\int_{|x| + |y| \in [0, \gamma)} x^2 dx dy = \Theta \left(\gamma^4\right).
\]
Furthermore,
\[
a(\gamma)  =\Theta \left(\frac{ 1}{e^{-\epsilon} + \gamma^2}
\right).
\]
Combining the above four equations yields,
\begin{align*}
\EE_{f_{\gamma}}[x^2]
& = a(\gamma) \Theta \left( \gamma^4 + e^{-\epsilon} + \sum_{k \geq 1} ( e^{-k\epsilon} + e^{(k+1)\epsilon}) k^3 \right) \\
& = a(\gamma) \Theta \left( \gamma^4 + e^{-\epsilon} \right) \\
& =  \Theta \left(\frac{ \gamma^4 + e^{-\epsilon}}{e^{-\epsilon} + \gamma^2}
\right).
\end{align*}
Minimizing over $\gamma$ yields the desired result.
\end{proof}
We next prove a result on the moments of both one-dimensional and two-dimensional staircase mechanisms.
\begin{lemma}
\label{lem:stair_moments}
Let $16 \geq m \geq 0$. For both the one-dimensional and two-dimensional staircase mechanisms that guarantees $\epsilon$-differential privacy for sensitivity one queries,
\[
\EE[x^m] = \frac{1}{\epsilon^{m-2}} \cdot O \left(\EE[x^2]\right)
\]
\end{lemma}
\begin{proof}
We provide the proof for the univariate staircase mechanism. The proof for the two-dimensional stair case mechanism is similar and omitted. Let $a(\gamma)$ denote the normalization term in the definition of univariate staircase mechanism \citet[equation 26]{geng2015staircase}. Since $\gamma \leq 1$,
\begin{align*}
\EE_{f_{\gamma}}[x^m]
& = a(\gamma) \Theta \left( \gamma^m + \sum_{k \geq 1} e^{-k \epsilon} k^m \right) \\
& \stackrel{(a)}{=} a(\gamma) O \left( \gamma^2 + \sum_{k \geq 1} e^{-k \epsilon} k^m \right) \\
& \stackrel{(b)}{=} a(\gamma) O \left( \gamma^2 + \frac{1}{(1-e^{-\epsilon})^{m-2}} \sum_{k \geq 1} e^{-k \epsilon} k^2 \right) \\
& =\frac{1}{(1-e^{-\epsilon})^{m-2}} a(\gamma) O \left( \gamma^2 + \sum_{k \geq 1} e^{-k \epsilon} k^2 \right) \\
& = \frac{1}{(1-e^{-\epsilon})^{m-2}} \cdot O \left(\EE_{f_{\gamma}}[x^2] \right) \\
& \stackrel{(c)}{=} \frac{1}{\epsilon^{m-2}} \cdot O \left(\EE_{f_{\gamma}}[x^2] \right),
\end{align*}
where $(a)$ uses the fact that $\gamma \leq 1$, $(c)$ follows from the fact that $1-e^{-\epsilon} \leq \epsilon$. To prove $(b)$ notice that 
\[
(1-e^{-\epsilon}) \sum_{k \geq 1} e^{-k\cdot \epsilon} k^{m} = \sum_{k \geq 1} e^{-k\cdot \epsilon} (k^m - (k-1)^m) = O \left( \sum_{k \geq 1} e^{-k\cdot \epsilon} k^{m-1} \right).
\]
\end{proof}
We next state a concentration property of both one-dimensional and two-dimensional staircase mechanisms.
\begin{lemma}
\label{lem:stair_concentration}
Let $n \geq 10$. For both the one-dimensional and two-dimensional staircase mechanisms that guarantees $\epsilon$-differential privacy for sensitivity one,
\[
\Pr(x \leq -n) \leq e^{-n\epsilon / 2}.
\]
\end{lemma}
\begin{proof}
We provide the proof for the univariate staircase mechanism. The proof for the two-dimensional stair case mechanism is similar and omitted. Let $a(\gamma)$ denote the normalization term in the definition of univariate staircase mechanism \citet[equation 26]{geng2015staircase}. Note that for the univariate staircase mechanism,
\[
a(\gamma) \leq \frac{1-e^{-\epsilon}}{2 e^{-\epsilon}}.
\]
We now bound the desired quantity as follows.
\begin{align*}
\Pr(x \leq -n) & \leq \sum_{k\geq n} a(\gamma) e^{-k \epsilon}  \\
 & = a(\gamma) \frac{e^{-n \epsilon}}{1-e^{-\epsilon}}   \\
 & \leq (1-e^{-\epsilon}) e^{\epsilon} \frac{e^{-n \epsilon}}{1-e^{-\epsilon}}   \\
 & = e^{-(n-1)\epsilon},
\end{align*}
where the last inequality follows from the bound on $a(\gamma)$.
\end{proof}

\section{Proof of Lemma~\ref{lem:technical}}
\label{app:technical}
%\begin{proof}
We first focus on the upper bound.
\begin{align*}
 \left(\clip\left( \frac{a + Z_a}{b+Z_b}  \right)- \frac{a}{b}   \right)^2
 & = \left(\clip\left( \frac{a + Z_a}{b+Z_b}  \right)- \frac{a}{b}   \right)^2 1_{Z_b \geq -b/2} +  \left(\clip\left( \frac{a + Z_a}{b+Z_b}  \right)- \frac{a}{b}   \right)^2 1_{Z_b < -b/2} \\
  & \leq  \left(\clip\left( \frac{a + Z_a}{b+Z_b}  \right)- \frac{a}{b}   \right)^2 1_{Z_b \geq -b/2} +  4M^2 1_{Z_b < -b/2} \\
   & \leq  \left(\clip\left( \frac{a + Z_a}{b+Z_b}  \right)- \frac{a}{b}   \right)^2 1_{Z_b \geq -b/2} + 4M^2 1_{Z_b < -b/2},
\end{align*}
where the first inequality uses the fact that both $\clip\left( \frac{a + Z_a}{b+Z_b}  \right)$ and $\frac{a}{b}$ lie in $[0,1]$ and the last inequality uses the fact that clipping is a projection operator. Taking expectation on both sides yield,
\begin{align*}
\EE \left[  \left(\clip\left( \frac{a + Z_a}{b+Z_b}  \right)- \frac{a}{b}   \right)^2 \right]
& \leq \EE \left[\left( \frac{a + Z_a}{b+Z_b} - \frac{a}{b}  \right)^2 1_{Z_b  \geq -b/2}\right] + 4M^2\Pr(Z_b < -b/2).
\end{align*}
We now use algebraic manipulation to simplify $ \frac{a + Z_a}{b+Z_b} - \frac{a}{b}$.
\begin{align}
 \frac{a + Z_a}{b+Z_b} - \frac{a}{b}
 & = \frac{Z_a}{b} + \frac{a + Z_a}{b+Z_b} - \frac{a+Z_a}{b} \nonumber \\
& =  \frac{Z_a}{b} - \frac{(a + Z_a)(Z_b)}{(b+Z_b)(b)}\nonumber \\
& =  \frac{Z_a}{b} - \frac{aZ_b}{(b+Z_b)(b)} -  \frac{Z_a Z_b}{(b+Z_b)(b)} \nonumber\\
& =  \frac{Z_a}{b} - \frac{aZ_b}{b^2} +  \frac{aZ_b}{b^2}  -  \frac{aZ_b}{(b+Z_b)(b)} -  \frac{Z_a Z_b}{(b+Z_b)(b)} \nonumber\\
& =  \frac{Z_a}{b} - \frac{aZ_b}{b^2} +  \frac{aZ^2_b}{b^2(b+Z_b)} -  \frac{Z_a Z_b}{(b+Z_b)(b)}.\nonumber % \label{eq:algebraic}
\end{align}
Let $C = \frac{Z_a}{b} - \frac{aZ_b}{b^2}$ and $D = \frac{aZ^2_b}{b^2(b+Z_b)} -  \frac{Z_a Z_b}{(b+Z_b)(b)}$.  
\begin{align*}
\EE \left[\left( \frac{a + Z_a}{b+Z_b} - \frac{a}{b}  \right)^2 1_{Z_b  \geq - b/2}\right]
& = \EE \left[\left(C+D \right)^2 1_{Z_b  \geq -b/2}\right] \\
& = \EE \left[C^2 1_{Z_b  \geq -b/2}\right] + \EE \left[D^2 1_{Z_b  \geq- b/2}\right] + \EE \left[2 C D 1_{Z_b  \geq -b/2}\right] \\
& \leq \EE \left[C^2 1_{Z_b  \geq -b/2}\right] + \EE \left[D^2 1_{Z_b  \geq -b/2}\right] + 2 \sqrt{\EE \left[ C^2 \right] \EE\left[ D^2 1_{Z_b  \geq -b/2}\right]},
\end{align*}
where the last inequality uses Cauchy-Schwarz inequality. We next upper bound $D$. If $Z_b \geq -b/2$, and $|a| \leq b M$, then
\begin{align*}
|D| =\left \lvert \frac{aZ^2_b}{b^2(b+Z_b)} -  \frac{Z_a Z_b}{(b+Z_b)(b)} \right \rvert
& \leq \left \lvert \frac{aZ^2_b}{b^2(b+Z_b)}\right \rvert + \left \lvert  \frac{Z_a Z_b}{(b+Z_b)(b)} \right \rvert  \\
& \leq  \left \lvert \frac{MZ^2_b}{b(b+Z_b)}\right \rvert + \left \lvert  \frac{Z_a Z_b}{(b+Z_b)(b)} \right \rvert  \\
& \leq \left \lvert \frac{2MZ^2_b}{b^2}\right \rvert + \left \lvert  \frac{2Z_a Z_b}{b^2} \right \rvert.
\end{align*}
Let $F = \left \lvert \frac{2MZ^2_b}{b^2}\right \rvert + \left \lvert  \frac{2Z_a Z_b}{b^2} \right \rvert $. Combing the above bound with previous equations yields the upper bound:
\begin{align*}
\EE \left[\left( \frac{a + Z_a}{b+Z_b} - \frac{a}{b}  \right)^2 1_{Z_b  \geq -b/2}\right]
& \leq  \EE \left[C^2 1_{Z_b  \geq -b/2}\right] + \EE \left[D^2 1_{Z_b  \geq -b/2}\right] + 2 \sqrt{\EE \left[ C^2 \right] \EE\left[ D^2 1_{Z_b  \geq -b/2}\right]} \\
& \leq   \EE \left[C^2 1_{Z_b  \geq -b/2}\right] + \EE \left[F^2 1_{Z_b  \geq -b/2}\right] + 2 \sqrt{\EE \left[ C^2 \right] \EE\left[ F^2 1_{Z_b  \geq -b/2}\right]} \\
& \leq   \EE \left[C^2\right] + \EE \left[F^2\right] + 2 \sqrt{\EE \left[ C^2 \right] \EE\left[ F^2\right]}.
\end{align*}
%\end{proof}
\section{Proof of Lemma~\ref{lem:current}}
\label{app:current}
%\begin{proof}
The differential privacy guarantee follows by the properties of Laplace mechanism and composition theorem and in the rest of the proof, we focus on the MSE guarantees. The analysis of MSE heavily relies in Lemma~\ref{lem:technical}. Let $n = |D|$, $a = s$, $b = n$, $Z_a = Z_s \sim \lap(w/\varepsilon)$, $Z_b = Z_n = \lap(2/\varepsilon)$, and $M = (u - \ell)/2$. With these definitions, to apply Lemma~\ref{lem:technical}, we need to bound $\EE[C^2]$, $\EE[F^2]$, and $\Pr(Z_b < -b/2)$. We bound each of the terms below.
\begin{align*}
\EE[C^2] & = \EE\left[\left(\frac{Z_s}{n} - \frac{(\mu-m) Z_n}{n} \right)^2 \right] \\ 
 &= \frac{2(u-l)^2}{n^2\varepsilon^2} + \frac{8(\mu-m)^2}{n^2\varepsilon^2}.
\end{align*}
Since $(x+y)^2 \leq 2 x^2 + 2 y^2$, 
\begin{align*}
\EE[F^2] &  \leq 8\frac{(u-\ell)^2}{n^4} \EE \left[Z^4_n\right] + 8\frac{1}{n^4} \EE \left[Z^2_a  Z^2_b\right]  \\
& = o \left(\frac{(u-l)^2}{n^2\varepsilon^2} \right).
\end{align*}
Finally, by the tail bounds of the Laplace mechanism,
\begin{align*}
M^2 \Pr(Z_b < -b/2)  & \leq (u - \ell)^2 \Pr(Z_n < -n/2) \\
& = o \left(\frac{(u-l)^2}{n^2\varepsilon^2} \right).
\end{align*}
Combining the above three equations together with Lemma~\ref{lem:technical} yields the lemma.
%\end{proof}

\section{Analysis of the \capalgname Mechanism}
\subsection{Computing normalization constant}
\label{app:norm}
Let $c(\gamma)$ be the normalizing factor that ensures the sum of the probabilities is one.

From the paper \citet{geng2014optimal}, we know the normalizing factor for the univariate staircase mechanism to be 
$$a(\gamma) \triangleq \frac{1-e^{-\epsilon}}{2(\gamma + ^{-\epsilon}(1-\gamma))}.$$
From Lemma \ref{lem:marginal_1d_staircase}, the marginal of the \algname distribution is the univariate staircase distribution. hence,
\begin{align*}
\sum_{y} f_\gamma(x, y) = a(\gamma),
\end{align*}
where $a(\gamma)$ is the normalization factor in the univariate staircase distribution. Observe that if $x =0$, then for all integer $y$, $(x,y) \in A_y$ and hence
\[
\sum_{y} f_\gamma(x, y) = \sum_{y} c(\gamma) e^{-\epsilon y} = c(\gamma) \frac{1-e^{-\epsilon}}{1+e^{-\epsilon}}.
\]
Hence,
\[
c(\gamma) = \frac{1-e^{-\epsilon}}{1+e^{-\epsilon}}a(\gamma) = \frac{(1-e^{-\epsilon})^2}{2\Delta(1+e^{-\epsilon})(\gamma + e^{-\epsilon}(1-\gamma))}.
\]
\subsection{Proof of Theorem \ref{thm:hourglass_dp}}
\label{app:hour_privacy}

Let $D$ and $D'$ be two neighboring datasets such that $q(D) - Q(D_0) = -(x_0, 1-x_0)$. Let let $\tilde{q}(D)$ denote the output of the \algname mechanism. To provide differential privacy guarantee, it suffices to prove 
upper and lower bounds for 
\[
 \frac{\Pr(\tilde{q}(D) = (x, y))}{\Pr(\tilde{q}(D') = (x,y))}.
\]
Let $\bar{Z}$ be a sample from the \algname mechanism, then
\begin{align*}
 \frac{\Pr(\tilde{q}(D) = (x, y))}{\Pr(\tilde{q}(D') = (x,y))} 
 & = 
 \frac{\Pr({q}(D) + \bar{Z} = (x, y))}{\Pr({q}(D') + \bar{Z} = (x,y))} \\
 & = 
 \frac{\Pr({q}(D') - (x_0, 1-x_0) + \bar{Z} = (x, y))}{\Pr({q}(D') + \bar{Z} = (x,y))} \\
  & = 
 \frac{\Pr( \bar{Z} = (x, y) - {q}(D')+ (x_0, 1-x_0))}{\Pr(\bar{Z} = (x, y) - {q}(D'))} \\
 & = 
 \frac{\Pr( \bar{Z} = (x', y') + (x_0, 1-x_0))}{\Pr(\bar{Z} = (x', y')}  \\
  & = 
 \frac{f_{\gamma}(x' + x_0, y' + 1 - x_0) )}{f_{\gamma}(x', y')},
\end{align*}
where $(x', y') = (x, y) - {q}(D')$. Hence, to prove the privacy guarantee, it suffices to prove upper and lower bounds on the ratio, 
\[
R(x, y, x_0) \triangleq \frac{f_{\gamma}(x + x_0, y + 1 - x_0) )}{f_{\gamma}(x, y)},
\]
for all $x, y$ and $x_0 \in [0, 1]$. Without loss of generality, we assume that $\Delta = 1$.  Let $\mathcal{S} = \{A_k, B_k(i): \forall k, i \}$. Note that $\mathcal{S}$ partitions the domain of $f_\gamma$ into disjoint partitions. We observe that if $(x, y) \in A_{-1}$, then $f_{\gamma}(x, y) = c(\gamma) e^{-\epsilon}$ and for all $x_0$, $f_{\gamma}(x + x_0, y + 1 - x_0) \in \{c(\gamma) , c(\gamma)e^{-2\epsilon} \}$, hence if $(x, y) \in A_{-1}$ then $R(x, y, x_0)  \in \{e^{-\epsilon}, e^{\epsilon} \}$ for all $x_0$. If $(x, y) \notin A_{-1}$, then $(x+x_0, y+1-x_0)$ belongs to at most two sets in $\mathcal{S}$ and hence  $f_{\gamma}(x + x_0, y + 1 - x_0)$ is monotonic in $x_0$ and proving the result for $x_0 \in \{0, 1\}$  suffices.

We now focus on the case when $x_0 \in \{0, 1\}$. Of these two cases, by symmetry it suffices to consider $x_0 = 0$. Furthermore, we show the result when $x +y = k$ for some $x \geq 0$. The proof for the other side is similar and omitted. We prove the result by dividing the problem into subcases depending the value of $(x, y)$. Subcase $(A)$: If $(x, y) \in A_k$ for some $k$, then $(x, y+1) \in A_{k+1}$ and hence $R(x, y, 0) = e^{-\epsilon}$.  Subcase $(B1)$: If $(x, y) \in B_k(1)$, then $(x, y) \in A_{k+1}$ and hence $R(x, y, 0) = e^{\epsilon}$. Subcase $(B2)$: If $(x, y) \in B_k(i)$ for $i \geq 2$, then $(x, y) \in B_{k+1}(i-1)$ and hence $R(x, y, 0) = e^{\epsilon}$. Hence, we have shown that for all $x,y$ and $x_0 \in \{0, 1\}$,
\[
R(x, y, x_0)  \in \{e^{-\epsilon}, e^{\epsilon}\}.
\]

\section{Proof of Lemma~\ref{lem:two_stair_mean}}
\label{app:two_stair_mean}
%\begin{proof}
The privacy guarantee is similar to that of Theorem~\ref{thm:new} and is omitted.
As before, the proof of utility heavily relies in Lemma~\ref{lem:technical}. 
Let $n = |D|$. Observe that 
\[
\EE[(\hat{\mu} - \mu)^2] \leq (u - \ell)^2 \EE \left[ \left(\frac{ \hat s_1}{\hat s_1 + \hat s_2} - \frac{s_1}{n}  \right)^2\right].
\]
Let $a = s_1$, $b = n$, $Z_a = Z_{1}$, $Z_b = Z_1 + Z_2$, where $Z_1, Z_2$ are from the two-dimensional staircase mechanism. Let $M = 1$. With these definitions, to apply Lemma~\ref{lem:technical}, we need to bound $\EE[C^2]$, $\EE[F^2]$, and $\Pr(Z_b < -b/2)$. We bound each of the terms below.
Let $\alpha = \frac{s_1}{n}$.
\begin{align*}
& \EE[C^2] \\
& = \EE\left[\left(\frac{Z_1}{n} - \frac{(\frac{s_1}{n}) (Z_1 + Z_2)}{n} \right)^2 \right] \\
& = \frac{1}{n^2} \EE \left[(1-\alpha)^2 Z^2_1 + \alpha^2 Z^2_2 - 2 (1-\alpha )\alpha Z_1, Z_2 \right] \\
& \leq \frac{1}{n^2} \EE \left[(1-\alpha)^2 Z^2_1 + \alpha^2 Z^2_2\right] + 2 (1-\alpha )\alpha \sqrt{\EE\left[ Z^2_1, Z^2_2 \right]} \\
& \leq \frac{1}{n^2} (1-\alpha)^2\widetilde{\sigma}^2(\epsilon) + \alpha^2 \widetilde{\sigma}^2(\epsilon) + 2 (1-\alpha )\alpha \widetilde{\sigma}^2(\epsilon) \\
& = \frac{\widetilde{\sigma}^2(\epsilon)}{n^2},
\end{align*}
where the first inequality follows by the Cauchy-Schwarz inequality. Since $(x+y)^2 \leq 2 x^2 + 2 y^2$,
\begin{align*}
\EE[F^2]  & \leq \frac{8}{n^4} \EE[(Z_1 + Z_2)^4] + \frac{8}{n^4} \EE[(Z_1 + Z_2)^2 Z^2_1] \\
& = o \left( \frac{\widetilde{\sigma}^2(\epsilon)}{n^2} \right),
\end{align*}
where the last equality follows from Lemma~\ref{lem:stair_moments}.
Finally,
\begin{align*}
\Pr(Z_b < -b/2)  & \leq \Pr(Z_1 + Z_2 < -n/2) \\
& \leq \Pr(Z_1  < -n/4) + \Pr( Z_2 < -n/4) \\ 
&= o \left( \frac{\widetilde{\sigma}^2(\epsilon)}{n^2} \right),
\end{align*}
where the last equality follows from Lemma~\ref{lem:stair_concentration}.
Combining the above three equations together with Lemma~\ref{lem:technical} and observing the fact that $\mu = \ell + \frac{s_1w}{n}$ yields the result.
%\end{proof}

\section{Proof of Theorem~\ref{thm:hourglass_main}}
\label{app:hourglass_main}
%\begin{proof}
The privacy guarantee is similar to that of Theorem~\ref{thm:new} and is omitted.
As before, the proof of utility heavily relies in Lemma~\ref{lem:technical}. 
Let $n = |D|$. Observe that 
\[
\EE[(\hat{\mu} - \mu)^2]  \leq (u - \ell)^2 \EE \left[ \left(\frac{ \hat s_1}{\hat s_1 + \hat s_2} - \frac{s_1}{n}  \right)^2\right].
\]
Let $a = s_1$, $b = n$, $Z_a = Z_{1}$, $Z_b = Z_1 + Z_2$, where $Z_1, Z_2$ are from the \algname  mechanism. Let $M = 1$. With these definitions, to apply Lemma~\ref{lem:technical}, we need to bound $\EE[C^2]$, $\EE[F^2]$, and $\Pr(Z_b < -b/2)$. We bound each of the terms below.
Let $\alpha = \frac{s_1}{n}$.
\begin{align*}
& \EE[C^2] \\
& = \EE\left[\left(\frac{Z_1}{n} - \frac{(\frac{s_1}{n}) (Z_1 + Z_2)}{n} \right)^2 \right] \\
& = \frac{1}{n^2} \EE \left[(1-\alpha)^2 Z^2_1 + \alpha^2 Z^2_2 - 2 (1-\alpha )\alpha Z_1, Z_2 \right] \\
& \leq \frac{1}{n^2} \EE \left[(1-\alpha)^2 Z^2_1 + \alpha^2 Z^2_2\right] + 2 (1-\alpha )\alpha \sqrt{\EE\left[ Z^2_1, Z^2_2 \right]} \\
& \leq \frac{1}{n^2} (1-\alpha)^2\widetilde{\sigma}^2(\epsilon) + \alpha^2 \widetilde{\sigma}^2(\epsilon) + 2 (1-\alpha )\alpha \widetilde{\sigma}^2(\epsilon) \\
& = \frac{{\sigma}^2(\epsilon)}{n^2},
\end{align*}
where the first inequality follows by the Cauchy-Schwarz inequality. Since $(x+y)^2 \leq 2 x^2 + 2 y^2$,
\begin{align*}
\EE[F^2]  & \leq \frac{8}{n^4} \EE[(Z_1 + Z_2)^4] + \frac{8}{n^4} \EE[(Z_1 + Z_2)^2 Z^2_1] \\
& = o \left( \frac{{\sigma}^2(\epsilon)}{n^2} \right),
\end{align*}
where the last equality follows from Lemma~\ref{lem:stair_moments} and the fact that the marginal distribution of \algname mechanism is same as the staircase mechanism..
Finally,
\begin{align*}
\Pr(Z_b < -b/2)  & \leq \Pr(Z_1 + Z_2 < -n/2) \\
& \leq \Pr(Z_1  < -n/4) + \Pr( Z_2 < -n/4) \\ 
&= o \left( \frac{{\sigma}^2(\epsilon)}{n^2} \right),
\end{align*}
where the last equality follows from Lemma~\ref{lem:stair_concentration} and the fact that the marginal distribution of \algname mechanism is same as the staircase mechanism.
Combining the above three equations together with Lemma~\ref{lem:technical} and observing the fact that $\mu = \ell + \frac{s_1 w}{n}$ yields the theorem.
%\end{proof}

\section{Proof of Theorem~\ref{thm:lower}}
\label{app:lower}

We first state the following lemma, which removes the dependence on $\ell$ and $u$. 
\begin{lemma}
\label{lem:zeroone}
\[
R_{\ar}(\varepsilon, \ell, u) = (u - \ell)^2 R_{\ar}(\varepsilon, 0, 1).
\]
\end{lemma}
\begin{proof} 
Given a dataset from $D \in \cD^*(\ell, u)$, one can create a dataset in $f(D) \in \cD^*(0, 1)$ by applying the transformation $f(x) = \frac{x - \ell}{u-\ell}$ to each of the points. Let $\hat{\mu}$ be a mean estimation algorithm for datasets in $\cD^*(0, 1)$, then given a dataset from $D \in \cD^*(\ell, u)$, one can scale all points by applying $f$ and compute the output as $\hat{\mu}'(D) = f^{-1}(\hat{\mu}(f(D))$. If $\hat{\mu}$ is an $\varepsilon$-differentially private algorithm, then $\hat{\mu}'$ is also an $\varepsilon$-differentially private algorithm. Furthermore, the utilities are related by 
\[
L(\hat{\mu}', D) \leq (u-\ell)^2 L(\hat{\mu}, f(D)).
\]
Taking supremum over datasets $D$ and infimum over all differentially private algorithms yields
\[
R_{\ar}(\varepsilon, \ell, u) \leq (u - \ell)^2 R_{\ar}(\varepsilon, 0, 1).
\]
The proof for the other direction is similar and omitted.
\end{proof}

\begin{proof}[Proof of Theorem~\ref{thm:lower}]
By Lemma~\ref{lem:zeroone}, it suffices to consider the scenario when $\ell = 0$ and $u = 1$.
Let $\cD^k$ and $S(D)$ be the same as those be defined as in Lemma~\ref{lem:conjecture}. Let $\widetilde{\cD}^k$ be the set of all datasets obtaining by combining each dataset in $\cD^k$ with $n$ values of ones. Observe for any dataset $D$ in $\widetilde{\cD}^k$, 
\[
\mu(D) = \frac{S(D)}{n+k} \leq \frac{k}{n}.
\]
Suppose we have an $\varepsilon$-differentially private estimator $\hat{\mu}$ on $\cD^k$. We convert it to an estimator of $S(D)$ as
\[
\hat{S} = n \hat{\mu}(D).
\]
For this estimator, based on the Lemma~\ref{lem:conjecture}, there exists a $D \in \cD^k$ such that 
\[
 \EE \left[(\hat{S} - S(D))^2\right] \geq \frac{2}{\varepsilon^2} ( 1 - o(1))
\]
and hence
\[
\EE \left[n \hat{\mu} - S(D))^2\right] \geq \frac{2}{\varepsilon^2} ( 1 - o(1)),
\]
which implies
\[
|D|^2 \EE \left[n^2 \hat{\mu} - S(D))^2\right] \geq \frac{2}{\varepsilon^2} ( 1 - o(1)).
\]
We now upper bound the left hand side of the above expression.
\begin{align*}
& \EE \left[(n \hat{\mu} - S(D))^2\right] \\ 
& = \EE \left[(n \hat{\mu} - n \mu  + n\mu - S(D))^2\right] \\
& = \EE \left[(n \hat{\mu} - n \mu  + n\mu - S(D))^2\right] \\
& = \EE \left[(n \hat{\mu} - n \mu  - \mu(D) k)^2\right] \\
& = \EE \left[(n \hat{\mu} - n \mu)^2\right] + \EE \left[( \mu(D) k)^2\right] \\
& - 2 \EE \left[(n \hat{\mu} - n \mu)( \mu(D) k)\right] \\
& \leq \EE \left[(n \hat{\mu} - n \mu)^2\right] + \frac{k^4}{n^2} + 2 \sqrt{\EE \left[(n \hat{\mu} - n \mu)^2\right]} \frac{k^2}{n} \\
& \leq \EE \left[(n \hat{\mu} - n \mu)^2\right] + \frac{k^4}{n^2} + \frac{k^3}{n} \\
& \leq |D|^2 \EE \left[(\hat{\mu} -  \mu)^2\right] + \frac{k^4}{n^2} + \frac{k^3}{n},
\end{align*}
where the first inequality follows by Cauchy-Schwarz inequality and the second inequality follows by observing that both $\hat{\mu}$ and $\mu$ lie in $[0, k/n]$. Setting $k = o(n^{1/3})$ yields the following theorem.
\end{proof}

\section{Additional experiments}
\label{app:experiments}
\begin{figure}[t]
        \centering
        \includegraphics[scale=0.6]{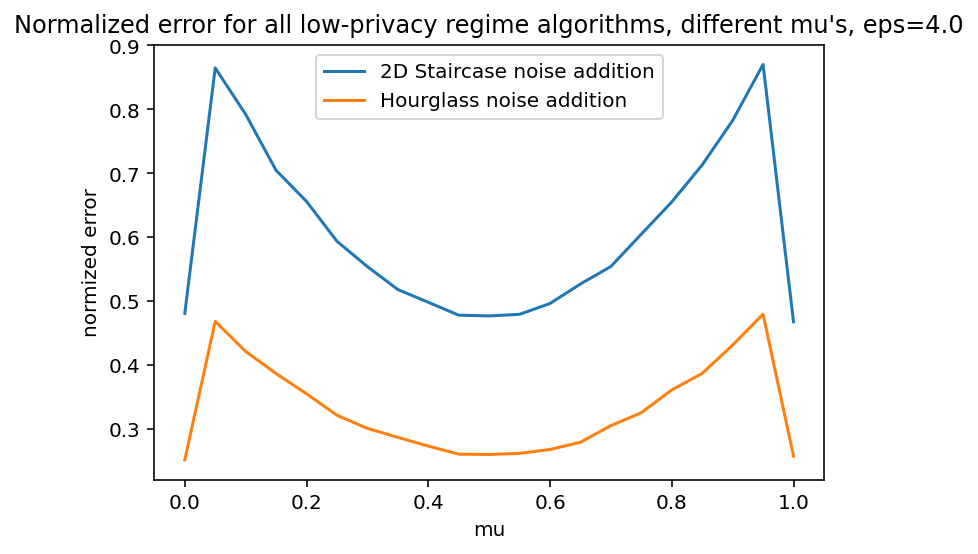}
        \caption{\algname and two dimensional staircase mechanisms, low privacy regime, varying $\mu$.}
           \label{fig:low_privacy_mu_vary}
\end{figure}

In this section, we explore how the MSEs of different algorithms change when the the size of the dataset $|D|$ changes.  The errors are normalized by $\frac{ \epsilon^2}{2}$ to keep them in a similar range across different privacy regimes. We choose $|D| \in \{2^i\}_{i=0, \ldots, 6}$ and observe the trend of changes in MSEs for all listed algorithms.

\begin{figure*}[h]
    \centering
    \begin{subfigure}[t]{0.4\textwidth}
    \includegraphics[scale=0.4]{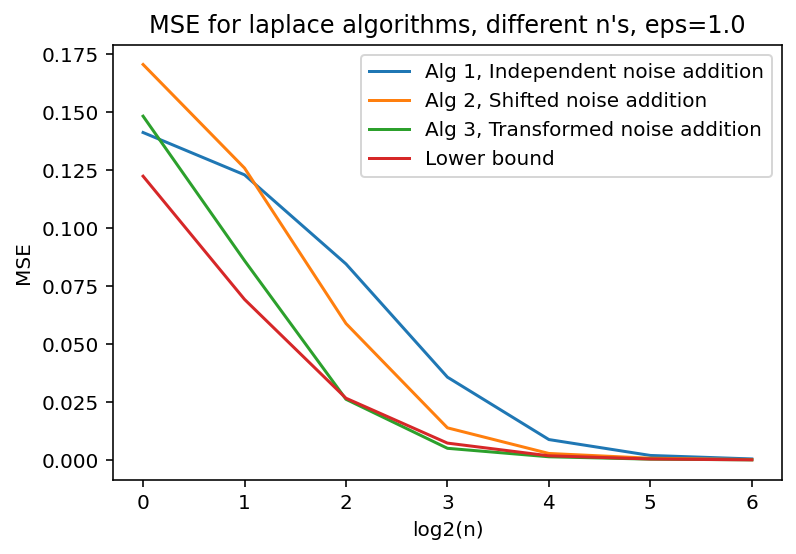}
    \subcaption{All Laplace mechanisms, high privacy regime, varying $|D|$.}
    \label{fig:high_privacy_all_ns}
    \end{subfigure}
    ~~
    \begin{subfigure}[t]{0.4\textwidth}
    \includegraphics[scale=0.4]{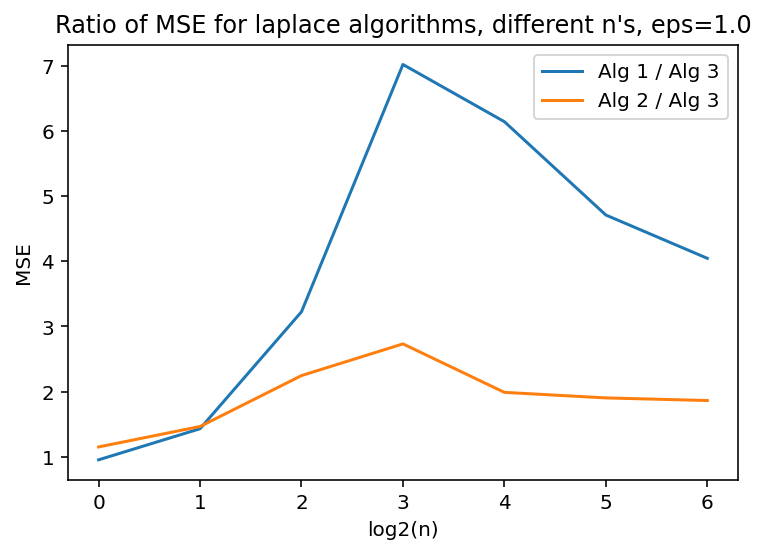}
    \subcaption{Ratio of MSEs, high privacy varying $|D|$, other Laplace mechanisms v.s. Transformed.}
    \label{fig:high_privacy_all_ns_ratio}
    \end{subfigure}
    ~~
        \begin{subfigure}[t]{0.4\textwidth}
    \includegraphics[scale=0.4]{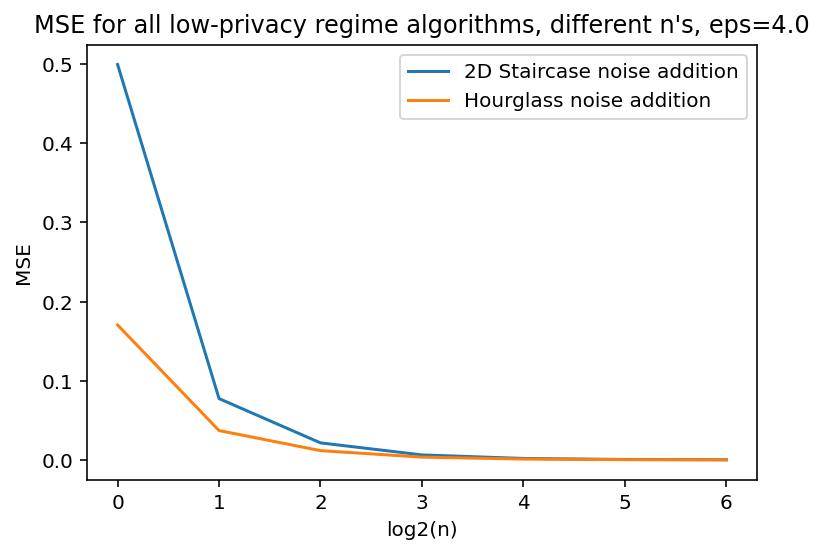}
    \subcaption{Ratio of MSEs, low privacy varying $|D|$, 2D staircase and \algname mechanisms.}
    \label{fig:low_privacy_all_ns_discrete}
    \end{subfigure}
    ~~
           \begin{subfigure}[t]{0.4\textwidth}
    \includegraphics[scale=0.4]{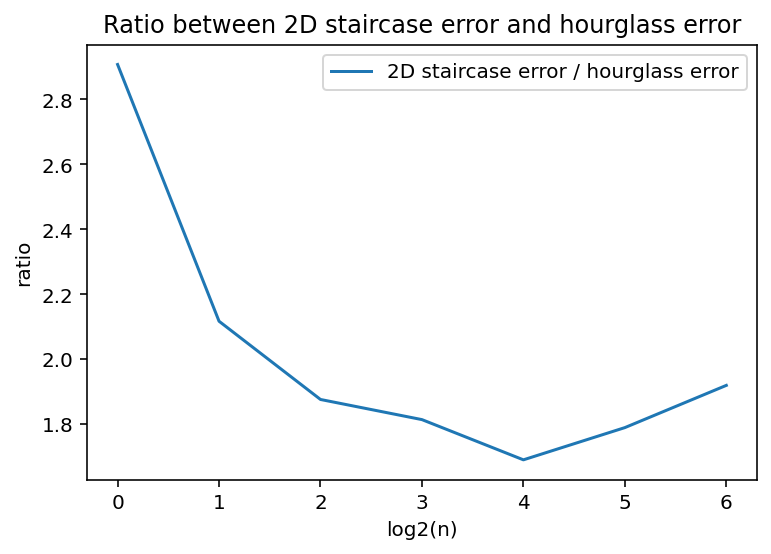}
    \subcaption{Ratio of MSEs, low privacy varying $|D|$, 2D-staircase v.s. \algname.}
    \label{fig:low_privacy_all_ns_ratio}
    \end{subfigure}
    \label{fig:all_ns}
\end{figure*}
Figure \ref{fig:high_privacy_all_ns} shows how the MSEs (normalized by $\frac{\epsilon^2}{2}$) of different Laplace mechanisms vary as a function of $|D|$  for $\epsilon = 1.0$ and $\mu = 0.01$. The lower bound (red curve), as in Section~\ref{sec:experiments}, is developed from using the univariate staircase mechanism with optimal $\gamma$ on private mean in the swap model. One can see that only Algorithm~\ref{alg:new} converges to the lower bound as $|D|$ grows bigger, being about two times better than Algorithm~\ref{alg:current} when $|D| \geq 16$.

On the other hand, Figure  \ref{fig:low_privacy_all_ns_discrete} and \ref{fig:low_privacy_all_ns_ratio} compare the  MSEs (normalized by $\frac{\epsilon^2}{2}$) of the two-dimensional staircase and the \algname mechanism in the low privacy regime when $\epsilon = 4.0$ and $\mu = 0.01$. The ratio of the former over latter is found to be always bigger than one, showing \algname mechanism to be better.

\end{document}